\documentclass[a4paper, amsfonts, amssymb, amsmath, reprint, showkeys, twoside,superscriptaddress, onecolumn, nofootinbib]{revtex4-2}
\usepackage{graphicx} 
\usepackage{amsmath}
\usepackage{amssymb}
\usepackage{amsthm}
\usepackage{mathtools}
\usepackage{minted}
\usepackage{algorithm2e}
\usepackage{physics}
\usepackage{comment}
\usepackage[colorlinks=true, linkcolor=blue, citecolor=blue, urlcolor=blue]{hyperref}
\usepackage[capitalize,nameinlink]{cleveref}
\usepackage{overpic}
\usepackage{orcidlink}
\SetAlCapSkip{10pt}

\newcommand{\BCH}{\text{BCH}}
\newcommand{\bigO}{\mathcal{O}}

\newtheorem{theorem}{Theorem}

\newtheorem{corollary}[theorem]{Corollary}

\begin{document}
\title{Practical Estimation of Trotter Error for Hamiltonian Simulation}
\author{William Maxwell~\orcidlink{0009-0005-8603-2955}}
\email{william.maxwell@xanadu.ai}
\affiliation{Xanadu, Toronto, ON, MSG 2C8, Canada}
\author{Pablo A. M. Casares~\orcidlink{0000-0001-5500-9115}}
\affiliation{Xanadu, Toronto, ON, MSG 2C8, Canada}
\author{Robert A. Lang~\orcidlink{0000-0002-4345-3566}}
\affiliation{Xanadu, Toronto, ON, MSG 2C8, Canada}
\author{Stepan Fomichev~\orcidlink{0000-0002-1622-9382}}
\affiliation{Xanadu, Toronto, ON, MSG 2C8, Canada}
\author{Juan Miguel Arrazola~\orcidlink{0000-0002-0619-9650}}
\affiliation{Xanadu, Toronto, ON, MSG 2C8, Canada}
\author{Soran Jahangiri~\orcidlink{0000-0002-9988-8841}}
\affiliation{Xanadu, Toronto, ON, MSG 2C8, Canada}
\author{Ali Asadi~\orcidlink{0009-0001-9058-9052}}
\affiliation{Xanadu, Toronto, ON, MSG 2C8, Canada}
\author{Luis Alfredo Nunez Meneses~\orcidlink{0009-0003-6124-1860}}
\affiliation{Xanadu, Toronto, ON, MSG 2C8, Canada}
\author{Thomas Germain~\orcidlink{0009-0007-0289-1686}}
\affiliation{Xanadu, Toronto, ON, MSG 2C8, Canada}
\author{Danial Motlagh~\orcidlink{0009-0003-7655-4341}}
\email{danial.motlagh@xanadu.ai}
\affiliation{Xanadu, Toronto, ON, MSG 2C8, Canada}

\begin{abstract}
Trotter product formulas are a leading approach for Hamiltonian simulation on quantum computers, yet their practical performance has remained difficult to assess due to the challenge of accurately estimating the Trotter error.
In this work, we develop new theoretical results, algorithms, and software tools that advance the state-of-the-art in Trotter error estimation by orders of magnitude in both scale and accuracy.
On the theoretical side, we prove that in the asymptotic limit the error of a product formula depends on the \emph{diagonal} elements of the Baker-Campbell-Hausdorff (BCH) error operator in the eigenbasis of the Hamiltonian, rather than its full spectral norm---yielding an improved scaling for Hamiltonian simulation using product formulas.
On the algorithmic side, we introduce a compact representation of the BCH expansion that reduces the number of commutators from $\mathcal{O}(n^3)$ to $\mathcal{O}(n)$ for second-order, and from $\mathcal{O}(n^5)$ to $\mathcal{O}(n^2)$ for fourth-order formulas on $n$ fragments, complemented by an importance sampling scheme to further reduce the computational cost. We provide implementations of these techniques in software and demonstrate their power on two applications: (i) X-ray absorption spectroscopy of an electronic Hamiltonian (Li$_4$Mn$_2$O) at up to 56 qubits using tensor networks; and (ii) vibronic dynamics of naphthalene at over 100 qubits using ML-MCTDH, where we find that naive analytical bounds overestimate the required number of Trotter steps by nearly five orders of magnitude.
Our framework enables, for the first time, the accurate estimation of Trotter error at practically relevant system sizes, providing a foundation for fair algorithmic comparisons and rational design of product formulas.
\end{abstract}

\maketitle

\section{Introduction}

Simulation of quantum systems via time evolution serves as one of the most compelling use cases of quantum computers, with quantum signal processing (QSP)~\cite{low2017optimal,Low2019hamiltonian, motlagh2024generalized, berry2024doubling} and Trotter product formulas~\cite{trotter1959product, suzuki1991general} serving as the main algorithmic approaches. While QSP provides an asymptotically optimal algorithm for Hamiltonian simulation, Trotter product formulas require fewer qubits and are numerically found to be efficient in practice despite their suboptimal scaling; particularly in regimes of short time-evolution and less stringent accuracy requirements~\cite{childs2018toward, childs2021theory}. Despite this, they face the difficulty of accurately estimating their runtime a priori. However, estimating the Trotter error remains a central challenge in the development of product formulas, as the error determines the number of Trotter steps required by the algorithm, and hence its resource requirements. This makes accurate estimation of Trotter error crucial for designing efficient simulation algorithms and for elucidating the regimes where Trotter-based techniques outperform QSP-based methods.\\

Conventional approaches for computing Trotter error suffer from significant limitations in both scale and accuracy~\cite{childs2021theory, berry2007efficient}. Analytical methods typically rely on norm-based upper bounds derived from operator inequalities, which often overestimate the true runtime by orders of magnitude and therefore fail to provide useful guidance for practical implementations~\cite{berry2007efficient, wiebe2010higher, tran2020destructive}. On the other hand, exact numerical methods, such as directly computing the difference between the true evolution operator and its Trotterized approximation, rapidly become computationally infeasible as system size grows. These challenges create a substantial gap between theoretical error guarantees and practical error estimation for realistic simulation tasks.\\

In this work, we develop new theory, algorithms, and software tools that significantly surpass prior approaches to Trotter error estimation in both scale and accuracy. We begin by showing how Trotter error is best viewed through the lens of evolution under an effective Hamiltonian, $H' = H + \mathcal{E}$, as opposed to the original Hamiltonian $H$. We also prove a new theoretical result in \cref{sec:PT} showing that, in the asymptotic limit, the error of a product formula depends on the \emph{diagonal} elements of $\mathcal{E}$ in the eigenbasis of the Hamiltonian, rather than its full spectral norm, yielding an improved scaling for Hamiltonian simulation using product formulas. Next, in \cref{sec:BCH_expansion} we provide theory and software for efficiently computing the effective error $\mathcal{E}$ under a generic product formula. To do this, we introduce a compact representation that asymptotically reduces the number of commutators in the Baker-Campbell-Hausdorff (BCH) expansion. In particular, this compact representation reduces the number of commutators from $\mathcal{O}(n^3)$ to $\mathcal{O}(n)$ for second-order, and from $\mathcal{O}(n^5)$ to $\mathcal{O}(n^2)$ for fourth-order Trotter formula on $n$ fragments. We then complement this by an importance sampling scheme to further reduce the computational cost. Lastly, in \cref{sec:applications} we show how our software module, implemented in PennyLane~\cite{bergholm2018pennylane}, can be combined with state-of-the-art classical simulation techniques such tensor networks and ML-MCTDH~\cite{wang2015multilayer} to push limits of accurate Trotter error estimation far beyond what was previously possible. Specifically, we tightly estimate Trotter error for X-ray absorption spectroscopy of an electronic Hamiltonian at up to 56 qubits using tensor-networks and for vibronic dynamics at over 100 qubits using ML-MCTDH.

\section{Background}

In this section we provide an overview of the existing methods for estimating Trotter error. These methods can be categorized based on the trade-off between their predictive accuracy and computational cost. Analytical error bounds can be computed with relatively low effort, but they are pessimistic and typically overestimate resources by several orders of magnitude for practical applications~\cite{mehendale2025estimating}. Exact simulation methods, on the other hand, can provide precise Trotter error, but their computational scaling limits their applicability to small system sizes. Methods based on perturbation theory \cite{mehendale2025estimating} provide a compromise between accuracy and computational cost, and their extension and formalization provided in \cref{sec:PT} forms the basis of the approach in much of this work. We review all these methods and briefly explain their advantages and limitations.\\

Analytical approaches upper bound the Trotter error via repeated applications of the triangle inequality. At the most extreme end of the spectrum, one can upper bound the error in time evolution under a Hamiltonian $H = \sum_{j=1}^n H_j$ via the Trotter Suzuki product formula $S_{2k}(t)$ of order $2k$, using only the number of terms $n$ and the norm of the largest term $\Gamma = \text{max}_j ||H_j||$ \cite{berry2007efficient},
\begin{equation}
    \left\| e^{-iHt} - [S_{2k}(-it/r)]^r \right\| \leq \frac{5\left(2 \cdot 5^{k-1} n \cdot t \cdot \Gamma\right)^{2k+1}}{r^{2k}},
\end{equation}
or sum of norms of individual fragments~\cite{casas2026error}. This approach severely overestimates the true error and hence the number of Trotter steps required for a given accuracy. Tighter analytical bounds can be obtained at the expense of increased computational cost by instead looking at the norms of the nested commutators arising from the BCH expansion \cite{childs2021theory, casas2026error}. For example, the error from a first-order Trotter formula can be bounded via the following

\begin{equation}
    \left\| e^{-iHt} - [S_{1}(-it/r)]^r \right\| \leq \frac{t^2}{2r} \sum_{j=1}^L  \sum_{k=j+1}^L \parallel \left[H_k, H_j \right] \parallel.
\end{equation}

Expressing the error this way allows for taking into account cancellations when two terms $H_k$ and $H_j$ commute, or when their commutator can be analytically found. However, this bound is still applying a triangle inequality to bound the norm of the sum as a ``sum of the norms", and further applying another triangle inequality to bound the error from taking $r$ Trotter steps as $r$ times the error from a single Trotter step. Hence, while this approach can offer a tighter bound than the former analytical bound, it still typically overestimates the true runtime of the algorithm by orders of magnitude for practical problems.\\

At the other extreme end of the spectrum, Trotter errors can be exactly computed by constructing the full matrix form of the exact unitary $e^{-iHt}$ and the Trotter product $\left( \prod_{j} e^{-i H_j \Delta t} \right)^n$ and then computing the error as the spectral norm of their difference, or even more relevant errors such as the state and observable dependent error. However, due to the exponentially scaling dimension of the space these approaches are strictly limited to very small systems of typically less than 20 qubits, far smaller than practically relevant system sizes. \\

For eigenvalue estimation tasks, heuristic methods based on perturbation theory have been proposed \cite{mehendale2025estimating} that provide a more balanced trade-off between accuracy and computational cost. In these methods, the Trotterized circuit is treated as the evolution under an effective Hamiltonian  $H' = H + \mathcal{E}$ given by the Baker-Campbell-Hausdorff (BCH) formula. Then, given an approximate eigenstate of the Hamiltonian $\ket{\Psi}$, the error in the energy coming from a $d^{th}$ order product formula is estimated as

\begin{equation}
     \Delta E \approx \Delta t^d\,\bra{\Psi} \mathcal{E} \ket{\Psi} .
\end{equation}

In the next section, we mathematically formalize this previously heuristic approach and extend it to the case of energy estimations involving multiple eigenstates such as spectroscopy and dynamical observables.

\section{Methodology and Algorithms}

In this section we layout our methodology for estimating the number of Trotter steps required for simulating time-evolution under a Hamiltonian $H = \sum_{j=1}^n H_j$ using a $d^{th}$ order product formula
\begin{equation}
    S_d(\Delta t) = \prod_{s=1}^N e^{-i\Delta t \,a_s H_{j_s}},
\end{equation}
such that the error in the simulation is below some required accuracy $\epsilon$,
\begin{equation}
   \left\| e^{-itH} - S_d(\Delta t)^{t/\Delta t} \right\| < \epsilon,
\end{equation}
where $j_s \in \{1,2,…,m\}$ is an index sequence dictating which term $H_j$ is being applied at stage $s$. However, note that the spectral norm is a much stricter accuracy requirement than what is needed in practice, as it is the maximum possible error over all possible initial state and observable pairs. In practice, we often care about the resulting error in the value of an observable $O$ given an initial state $\ket{\psi}$,
\begin{equation}
   \left| \langle \psi(t) | O  | \psi(t) \rangle - \langle \tilde\psi(t) | O | \tilde\psi(t) \rangle \right| < \epsilon,
\end{equation}
where $\ket{\psi(t)} = e^{-iHt}\ket{\psi}$ and $| \tilde\psi(t) \rangle = S_d(\Delta t)^{t/\Delta t}\,\ket{\psi}$.\\

However, regardless of the metric we choose for accuracy, in order to find the number of Trotter steps for a $d^{th}$ order product formula, we estimate the error $\tilde\epsilon$ arising from a given step size $\tilde\Delta t$ and set the simulation step size as $\displaystyle \Delta t = \tilde\Delta t \cdot \left(\epsilon/\tilde \epsilon\,\right)^{1/d} $. Hence, estimating $\tilde\epsilon$ becomes the central task in determining the runtime of simulation algorithms based on product formulas.\\

\subsection{Perturbation theory}\label{sec:PT}

Trotter error is best viewed through the lens of the error arising from evolution under an effective Hamiltonian as opposed to the original Hamiltonian of interest,
\begin{equation} \label{eq:eff_propagator}
    S_d(\Delta t)^{t/\Delta t} = e^{-itH'},
\end{equation}
for $H' =H + (\Delta t)^d\cdot\mathcal{E}$, where $\mathcal{E}$ is given by the BCH expansion as described in \cref{sec:BCH_expansion}. Choosing $\Delta t = t^{-1/d}$, we get
\begin{equation}
    S_d(\Delta t)^{t/\Delta t} = e^{-i(tH +\mathcal{E})}.
\end{equation}
We now introduce one of our key results: in the asymptotic limit, this effective Hamiltonian becomes diagonal in the basis of the original Hamiltonian.
\begin{theorem}\label{thm:operator_limit}
    Given a Hamiltonian $H =\sum_k \lambda_k \, \ket{\lambda_k}\bra{\lambda_k}$ and a bounded Hermitian operator $\mathcal{E}$, we have that 
\[
\lim_{t\to \infty} e^{i(tH+ \mathcal{E})} = e^{i(tH+ D(\mathcal{E}))},
\]
where $D(\mathcal{E})= \sum_n \bra{\lambda_n}\mathcal{E}\ket{\lambda_n}\, \ket{\lambda_n}\bra{\lambda_n} $.\\
\end{theorem}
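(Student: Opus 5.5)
The plan is to read the statement as $\lim_{t\to\infty}\bigl\| e^{i(tH+\mathcal{E})} - e^{i(tH+D(\mathcal{E}))}\bigr\| = 0$, since neither side converges on its own. I will work in finite dimension and first assume that $H$ has nondegenerate spectrum, with minimal gap $g = \min_{n\neq m}|\lambda_n-\lambda_m| > 0$. The idea is to rescale: write $tH+\mathcal{E} = t\,(H+\varepsilon\mathcal{E})$ with $\varepsilon = 1/t$. The problem then becomes a standard small-$\varepsilon$ perturbation problem for the Hermitian family $H(\varepsilon)=H+\varepsilon\mathcal{E}$, studied as $\varepsilon\to 0$.

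\textbf{Step 1 (perturbation estimates).} Using analytic (Rayleigh--Schr\"odinger/Kato) perturbation theory for a nondegenerate Hermitian family, for $\varepsilon\|\mathcal{E}\| < g/2$ the operator $H(\varepsilon)$ has simple eigenvalues and rank-one spectral projectors with
\[
\mu_n(\varepsilon) = \lambda_n + \varepsilon \bra{\lambda_n}\mathcal{E}\ket{\lambda_n} + \bigO(\varepsilon^2\|\mathcal{E}\|^2/g),
\qquad
\|P_n(\varepsilon) - \ket{\lambda_n}\bra{\lambda_n}\| = \bigO(\varepsilon\|\mathcal{E}\|/g).
\]
The projector bound can be obtained cleanly from the Riesz integral $P_n(\varepsilon) = \frac{1}{2\pi i}\oint_{|z-\lambda_n|=g/2}(z-H(\varepsilon))^{-1}dz$ together with a resolvent expansion. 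The eigenvalue estimate is the usual second-order bound.

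\textbf{Step 2 (spectral decomposition and comparison).} By the spectral theorem,
\[
e^{i(tH+\mathcal{E})} = \sum_n e^{it\mu_n(1/t)} P_n(1/t),
\qquad
e^{i(tH+D(\mathcal{E}))} = \sum_n e^{i(t\lambda_n + \mathcal{E}_{nn})}\ket{\lambda_n}\bra{\lambda_n},
\]
where the second identity holds because $D(\mathcal{E})$ commutes with $H$. The key observation is that the phase satisfies $t\mu_n(1/t) = t\lambda_n + \mathcal{E}_{nn} + \bigO(1/t)$. The factor $t$ exactly cancels the first-order $\varepsilon$, so the diagonal element survives as an $\bigO(1)$ phase. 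The second-order term becomes $\bigO(1/t)$ and vanishes.

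Subtract the two expressions term by term and add and subtract $e^{it\mu_n}\ket{\lambda_n}\bra{\lambda_n}$. Using $|e^{ia}-e^{ib}|\le|a-b|$, this gives
\[
\bigl\| e^{i(tH+\mathcal{E})} - e^{i(tH+D(\mathcal{E}))}\bigr\| \le \sum_n \Bigl(\|P_n(1/t)-\ket{\lambda_n}\bra{\lambda_n}\| + |t\mu_n(1/t) - t\lambda_n - \mathcal{E}_{nn}|\Bigr) = \bigO\!\left(\frac{N\|\mathcal{E}\|^2}{g\,t}\right) + \bigO\!\left(\frac{N\|\mathcal{E}\|}{g\,t}\right),
\]
which tends to zero. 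This also yields the explicit rate $\bigO(1/t)$ for finite dimension $N$.

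\textbf{Main obstacle: degeneracies.} If $H$ has a degenerate eigenvalue, the scalar statement is false as written unless the chosen basis $\{\ket{\lambda_n}\}$ also diagonalizes $\mathcal{E}$ within each eigenspace. The correct object is then the pinching $D(\mathcal{E}) = \sum_\lambda \Pi_\lambda\mathcal{E}\Pi_\lambda$ onto the eigenspaces $\Pi_\lambda$. To handle this case I would group the perturbed eigenvalues into clusters using Riesz projectors around each distinct $\lambda$, which still satisfy $\|\Pi_\lambda(\varepsilon)-\Pi_\lambda\| = \bigO(\varepsilon)$. Then I would apply degenerate first-order perturbation theory: the restriction of $H(\varepsilon)$ to the range of $\Pi_\lambda(\varepsilon)$ is unitarily close to $\lambda + \varepsilon\,\Pi_\lambda\mathcal{E}\Pi_\lambda + \bigO(\varepsilon^2)$. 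Exponentiating this restriction with time $t$ gives $e^{it\lambda}e^{i\Pi_\lambda\mathcal{E}\Pi_\lambda}+\bigO(1/t)$. The delicate point is controlling the second-order remainder inside each cluster uniformly in $t$. Because it enters the exponent multiplied by $t\varepsilon^2 = 1/t$, a Duhamel bound $\|e^{iA}-e^{iB}\|\le\|A-B\|$ suffices.

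In infinite dimensions one needs a spectral gap and a summability assumption; I would state the theorem for finite-dimensional (or finite-rank-relevant) $H$, which is the setting used in the paper.
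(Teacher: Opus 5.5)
Your argument is correct, and it takes a genuinely different route from the paper's.

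\textbf{The paper's route.} The paper never diagonalizes $tH+\mathcal{E}$. It works in the interaction picture, writing $e^{i(tH+\mathcal{E})}=e^{itH}V(1)$, where $V$ is the time-ordered exponential generated by $i\,e^{-itHs}\mathcal{E}e^{itHs}$. It expands the Dyson series and inserts $\Pi+(\mathbb{I}-\Pi)$, with $\Pi=\ket{\lambda_n}\bra{\lambda_n}$, between the factors. The paths that stay in $\Pi$ resum exactly to $e^{i\mathcal{E}_{nn}}\ket{\lambda_n}$. Every other path contains an oscillatory integral $\int_0^{s}e^{it(\lambda_n-\lambda_m)u}u^{p}\,du$, which integration by parts bounds by $2s^{p}/(t\gamma)$. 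Summing over all paths gives a residual of at most $4\|\mathcal{E}\|e^{2\|\mathcal{E}\|}/(\gamma t)$ on each eigenvector.

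\textbf{Your route and what each buys.} You instead rescale to the static family $H+\varepsilon\mathcal{E}$ with $\varepsilon=1/t$ and read the limit off its spectral decomposition. The paper's route is more elementary, using only the Dyson series and integration by parts with no Kato theory, and it matches the dephasing intuition behind the Trotter application. Yours isolates the mechanism more transparently: the first-order shift $\varepsilon\mathcal{E}_{nn}$, multiplied by $t$, is an $\bigO(1)$ phase, while second-order shifts and eigenvector rotation are $\bigO(1/t)$. It also gives a rate polynomial in $\|\mathcal{E}\|$ rather than $e^{2\|\mathcal{E}\|}$.

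Your factor $N$ comes only from summing over $n$ for the operator-norm statement. Applied to a single vector, the identity $e^{i(tH+\mathcal{E})}\ket{\lambda_n}=e^{it\mu_n(1/t)}P_n(1/t)\ket{\lambda_n}+e^{i(tH+\mathcal{E})}\bigl(\ket{\lambda_n}\bra{\lambda_n}-P_n(1/t)\bigr)\ket{\lambda_n}$ gives a dimension-free bound. That bound needs only that $\lambda_n$ be simple and isolated, which is exactly the form of the paper's bound.

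\textbf{Degeneracies.} Your handling of degenerate spectrum goes beyond the paper. The paper's proof tacitly requires $\gamma=\min_{m\neq n}|\lambda_n-\lambda_m|>0$, and the statement is false as written without it (take $H=0$). Your cluster-projector and Kato-reduction argument, with the limit given by the pinching $\sum_\lambda\Pi_\lambda\mathcal{E}\Pi_\lambda$, is the right generalization.

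\textbf{Making Step 1 self-contained.} The second-order eigenvalue estimate follows from the projector estimate via Hellmann--Feynman. Since $\mu_n'(s)=\mathrm{tr}(\mathcal{E}P_n(s))$ and $P_n(s)-P_n(0)$ has rank at most two, with $P_n(0)=\ket{\lambda_n}\bra{\lambda_n}$, you get $|\mu_n(\varepsilon)-\lambda_n-\varepsilon\mathcal{E}_{nn}|\le 2\|\mathcal{E}\|\int_0^{\varepsilon}\|P_n(s)-P_n(0)\|\,ds$.
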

We prove \cref{thm:operator_limit} in Appendix \ref{app:PT_proofs}. The above result leads to the following corollary.
\begin{corollary}\label{cor:PF_limit}
    Given a Hamiltonian $H =\sum_k \lambda_k \, \ket{\lambda_k}\bra{\lambda_k}$ and a product formula $S_d(\Delta t)$, such that
\[
S_d(\Delta t) = e^{i\Delta t\left(H + \Delta t^d\cdot\mathcal{E}\right)},
\]
let $U(t)= S_d\left(t^{-1/d}\right)^{t^{(1+1/d)}}$, then   
\[
\lim_{t\to \infty} U(t) = e^{i(tH+ D(\mathcal{E}))},
\]
where $D(\mathcal{E})= \sum_n \bra{\lambda_n}\mathcal{E}\ket{\lambda_n}\, \ket{\lambda_n}\bra{\lambda_n} $.\\
\end{corollary}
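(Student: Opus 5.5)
Since $U(t)=S_d(\Delta t)^{t/\Delta t}$ with $\Delta t=t^{-1/d}$, the plan is to reduce \cref{cor:PF_limit} to \cref{thm:operator_limit}. The exponent is chosen so that $t^{1+1/d}=t/\Delta t$ (interpreting $t^{1+1/d}$ as an integer, e.g.\ along a sequence of $t$ for which it is). Because $S_d(\Delta t)=e^{i\Delta t(H+\Delta t^d\mathcal{E})}$ is the exponential of a single operator, its integer power is again an exponential:
\begin{equation*}
U(t)=\exp\!\left(i\,\tfrac{t}{\Delta t}\,\Delta t\left(H+\Delta t^{d}\mathcal{E}\right)\right)=\exp\!\left(i\left(tH+t\,\Delta t^{d}\,\mathcal{E}\right)\right).
\end{equation*}
With $\Delta t^d=t^{-1}$ this collapses to $U(t)=e^{i(tH+\mathcal{E})}$ exactly, for every admissible $t$. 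The BCH remainder $\mathcal{E}$ is assumed $t$-independent, as in the statement.

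The corollary then follows directly from \cref{thm:operator_limit}, applied with the same bounded Hermitian $\mathcal{E}$. That theorem gives $e^{i(tH+\mathcal{E})}-e^{i(tH+D(\mathcal{E}))}\to 0$ as $t\to\infty$, which is the asserted limit. Two things need checking.
\begin{itemize}
\item \textbf{The limit along a subsequence.} The theorem is a statement for continuous $t\to\infty$, so it holds in particular along the integer-compatible sequence.
\item \textbf{Meaning of ``$\lim$''.} Both sides depend on $t$, so the statement should be read as the norm of the difference tending to zero. Once stated that way, no extra work is needed.
\end{itemize}

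The main caveat is whether $\mathcal{E}$ can really be taken independent of $\Delta t$. In general BCH yields $\mathcal{E}(\Delta t)=\mathcal{E}_0+\mathcal{O}(\Delta t)$. If one wants to allow this, the plan is to control the difference using the Duhamel bound
\begin{equation*}
\|e^{iA}-e^{iB}\|\le\|A-B\|.
\end{equation*}
Taking $A=tH+\mathcal{E}(t^{-1/d})$ and $B=tH+\mathcal{E}_0$ gives a difference of at most $\|\mathcal{E}(t^{-1/d})-\mathcal{E}_0\|=\mathcal{O}(t^{-1/d})\to0$. One then applies \cref{thm:operator_limit} to $\mathcal{E}_0$ and concludes by the triangle inequality.
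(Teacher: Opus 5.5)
Your proposal is correct and matches the paper's proof: substituting $\Delta t = t^{-1/d}$ collapses $U(t)$ exactly to $e^{i(tH+\mathcal{E})}$, and the claim then follows by direct application of \cref{thm:operator_limit}. Your additional remarks are sound refinements that the paper leaves implicit: taking integer powers along a subsequence, reading the limit as the norm of the difference vanishing, and using the Duhamel bound $\|e^{iA}-e^{iB}\|\le\|A-B\|$ to absorb a $\Delta t$-dependent remainder $\mathcal{E}(\Delta t)=\mathcal{E}_0+\mathcal{O}(\Delta t)$.
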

This leads to the implication that, in the asymptotic limit, the scaling of product formulas depend on the diagonal elements of $\mathcal{E}$ (in the eigenbasis of $H$) as opposed to its spectral norm as previously thought. More specifically, the number steps for a $d^{th}$ order product formula scales as $\mathcal{O}\left(\frac{t^{(1+1/d)}\cdot \|D(\mathcal{E})\|^{1/d}}{\epsilon^{1/d}}\right)$ as opposed to the previously established scaling of $\mathcal{O}\left(\frac{t^{(1+1/d)}\cdot \|\mathcal{E}\|^{1/d}}{\epsilon^{1/d}}\right)$.
\begin{theorem}\label{thm:error_limit}
    Given a Hamiltonian $H =\sum_k \lambda_k \, \ket{\lambda_k}\bra{\lambda_k}$ and a product formula $S_d(\Delta t)$, such that
\[
S_d(\Delta t) = e^{i\Delta t\left(H + \Delta t^d\cdot\mathcal{E}\right)},
\]
let $U(t)= S_d\left(\Delta t\right)^{t/\Delta t}$ for $\Delta t = \delta \cdot t^{-1/d}$, then   
\[
\lim_{t\to \infty} \left\|U(t) - e^{itH}\right\|= \left\|e^{i \delta^d \cdot D(\mathcal{E})} - \mathbb{I}\right\|\leq \delta^d \cdot \| D(\mathcal{E}) \|,
\]
and for any eigenstate $\ket{\lambda_n}$
\[
\lim_{t\to \infty} \left\|\left(U(t) - e^{itH}\right)\ket{\lambda_n}\right\|= \left|e^{i \delta^d \cdot \bra{\lambda_n}\mathcal{E}\ket{\lambda_n}} - 1\right|\leq \delta^d \cdot | \bra{\lambda_n}\mathcal{E}\ket{\lambda_n} |.
\]
Furthermore, given an observable $O$ we have
\[
\lim_{t\to \infty} \left\|\tilde O(t) - O(t)\right\|= \left\|e^{i \delta^d \cdot D(\mathcal{E})}\, O \, e^{-i \delta^d \cdot D(\mathcal{E})} - O\right\|\leq \delta^d \cdot \| \,[D(\mathcal{E}), O] \,\|,
\]
for $O(t) = e^{itH}Oe^{-itH}$ and $\tilde O(t) = U(t)\,O\,U(t)^\dagger$.\\
\end{theorem}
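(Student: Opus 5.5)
The plan is to reduce \cref{thm:error_limit} to \cref{thm:operator_limit}. The reduction rewrites $U(t)$ as the exponential of $tH$ plus a fixed bounded perturbation. Since $S_d(\Delta t) = e^{i\Delta t(H + \Delta t^d\mathcal{E})}$, raising it to the power $t/\Delta t$ (assumed an integer, or handled along a subsequence) gives
\[
U(t) = e^{i t (H + \Delta t^d \mathcal{E})} = e^{i(tH + t\Delta t^d \mathcal{E})}.
\]
With $\Delta t = \delta t^{-1/d}$ we have $t\Delta t^d = \delta^d$, so $U(t) = e^{i(tH + \delta^d\mathcal{E})}$. Here $\delta^d\mathcal{E}$ is a bounded Hermitian operator independent of $t$. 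Applying \cref{thm:operator_limit} with $\mathcal{E}$ replaced by $\delta^d\mathcal{E}$, and using $D(\delta^d\mathcal{E}) = \delta^d D(\mathcal{E})$, gives
\[
U(t) - e^{i(tH + \delta^d D(\mathcal{E}))} \to 0 .
\]
I read this limit statement in the only meaningful way: the difference of the two $t$-dependent unitaries tends to zero in norm.

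The rest is algebra on the diagonal approximant. Because $D(\mathcal{E})$ is diagonal in the eigenbasis of $H$, it commutes with $H$, so
\[
e^{i(tH+\delta^d D(\mathcal{E}))} = e^{itH} e^{i\delta^d D(\mathcal{E})}.
\]
This gives the three claims.

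\textbf{Operator norm.} We have $\|e^{itH}e^{i\delta^d D} - e^{itH}\| = \|e^{i\delta^d D} - \mathbb{I}\|$ by unitary invariance. The reverse triangle inequality then gives
\[
\lim_{t\to\infty}\|U(t)-e^{itH}\| = \|e^{i\delta^d D(\mathcal{E})} - \mathbb{I}\|.
\]
The bound by $\delta^d\|D(\mathcal{E})\|$ follows from $|e^{ix}-1| \le |x|$, applied on each eigenvalue of the diagonal operator.

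\textbf{Eigenstate error.} Applying the same reasoning to $\ket{\lambda_n}$, the vector $(e^{itH}e^{i\delta^d D} - e^{itH})\ket{\lambda_n}$ equals $e^{it\lambda_n}(e^{i\delta^d \bra{\lambda_n}\mathcal{E}\ket{\lambda_n}} - 1)\ket{\lambda_n}$. Its norm is the stated scalar, and the same inequality bounds it.

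\textbf{Observables.} Write $U(t) = e^{itH}e^{i\delta^d D}+R(t)$ with $\|R(t)\|\to 0$. Then $\tilde O(t) = U O U^\dagger$ differs from $e^{itH}e^{i\delta^d D}Oe^{-i\delta^d D}e^{-itH}$ by at most $2\|O\|\|R(t)\| + \|O\|\|R(t)\|^2 \to 0$. By unitary invariance,
\[
\|\tilde O(t) - O(t)\| \to \|e^{i\delta^d D}Oe^{-i\delta^d D} - O\|.
\]
For the final inequality, write
\[
e^{isD}Oe^{-isD} - O = \int_0^{\delta^d} e^{isD}\, i[D,O]\, e^{-isD}\,ds .
\]
Its norm is at most $\delta^d\|[D(\mathcal{E}),O]\|$.

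The main obstacle is not in this argument itself but in the dependence on \cref{thm:operator_limit}. That theorem must be read as norm convergence of the difference, and it holds cleanly when $H$ has a finite (or suitably gapped) spectrum. In the finite-dimensional setting the justification is the interaction-picture argument: off-diagonal terms of $\mathcal{E}$ between eigenspaces with distinct eigenvalues oscillate at frequency $t(\lambda_j-\lambda_k)$ and average out. Degenerate eigenspaces need care. If $H$ has degeneracies, $D$ should be the block-diagonal pinching, and the eigenstate statement requires choosing $\ket{\lambda_n}$ to diagonalize $\mathcal{E}$ within each block. I would state this convention explicitly. A secondary technicality is that $t/\Delta t = t^{1+1/d}/\delta$ must be an integer. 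I would handle this by taking $t$ along the sequence where it is, or by noting that a fractional final step contributes $O(\Delta t)\to 0$.
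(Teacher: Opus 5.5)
Your proposal is correct and follows the paper's proof essentially step for step: rewrite $U(t)=e^{i(tH+\delta^d\mathcal{E})}$, apply \cref{thm:operator_limit} to $\delta^d\mathcal{E}$, factor the limit as $e^{itH}e^{i\delta^d D(\mathcal{E})}$ using $[H,D(\mathcal{E})]=0$, and then use unitary invariance, $|e^{i\theta}-1|\le|\theta|$, and the integral representation of $e^{iA}Oe^{-iA}-O$. Your additions fill in points the paper glosses over without changing the route: reading the limit as norm convergence of the difference, the explicit $2\|O\|\|R(t)\|+\|O\|\|R(t)\|^2$ estimate, and the integrality and degeneracy caveats (the paper's proof of \cref{thm:operator_limit} tacitly assumes a nondegenerate spectrum, since it divides by $\gamma=\min_{m\neq n}|\lambda_n-\lambda_m|$).
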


This result not only establishes an improved scaling for Hamiltonian simulation using product formulas, but serves as the foundation for our approach to estimating Trotter error. While the above results are derived for $t\to \infty$, they still provide reasonable approximations for finite $t$ due to the sublinear scaling of the number of Trotter steps in $\epsilon$. For instance, approximating $\tilde \epsilon$ up to a $100\%$ relative error results in at most a $2^{1/d}$ relative error in the estimation of the number of Trotter steps. Hence, our strategy becomes clear: estimate $\langle \tilde\lambda_n | \mathcal{E} | \tilde\lambda_n \rangle$ for classically prepared approximate eigenstates $| \tilde\lambda_n \rangle$, and use \cref{thm:error_limit} to estimate the error of the product formula. To do this we need a way to compute $\mathcal{E}$, which we describe in the following subsection.

\subsection{Computing the effective Hamiltonian}\label{sec:BCH_expansion}
The main ingredient in our Trotter error estimation workflow is the Baker-Campbell-Hausdorff (BCH) formula~\cite{campbell1896law,baker1905alternants,hausdorff1906symbolische}. The BCH formula provides a solution $C$ to the equation $e^A e^B = e^C$, where $A, B$, and $C$ are elements of a Lie algebra. The solution $C$ is given as an infinite sum of nested commutators.
For a Hamiltonian expressed as a sum of fragments $H = \sum_{i=1}^n H_i$, we represent an order-$k$ commutator by $[H_{i_1}, \dots, H_{i_k}]$, where $i_j \in \{1, \dots, n\}$. This notation hides the bracketing structure of the commutator and will be used to denote an arbitrary order-$k$ commutator in cases where the bracketing structure is not important.
Computing the Trotter error of $H$ requires evaluating the commutators of the BCH expansion up to some fixed order $k$. Asymptotically, there are $\bigO(n^k)$ such commutators which corresponds to the dimension of the subspace of the free Lie algebra spanned by the order $k$ commutators.\\

The main challenge of computing Trotter error is the $\bigO(n^k)$ scaling for the number of commutators.
Computing the effective Hamiltonian requires evaluating the sum of the commutators, which requires computing expensive fragment products that appear when expanding the commutator by $[H_i, H_j] = H_iH_j - H_jH_i$. For perturbation error, we have to sum over expectation values of commutators with respect to an eigenstate of the form $\bra{\psi}[H_i, H_j] \ket{\psi}$, resulting in expensive matrix-vector products like $H_i \ket{\psi}$. Even for second order formulas the $\bigO(n^3)$ scaling can be a major bottleneck.\\

 The goal of this section is to compute the BCH expansion in a compact representation containing asymptotically fewer commutators. We derive expressions for $k=3$ and $k=5$ as sums of $\bigO(n)$ and $\bigO(n^2)$ commutators respectively. This is a transformative improvement compared to the $\bigO(n^3)$ and $\bigO(n^5)$ bounds implied by the dimensions of their subspaces. An algorithm based on the ideas in this section has been implemented in PennyLane's Trotter error module.\\

At a high level, our compact representation is obtained by recursively computing the symmetric BCH expansion in the Hall \cite{casas2009efficient} basis, and algebraically combining terms at each step. The end result is a heavily reduced number of commutators of the form
\begin{equation}\label{eq:compact_comm}
\left[ \sum_{i=1}^n \alpha_{1,i} H_i, \dots, \sum_{i=1}^n \alpha_{k,i} H_i \right],
\end{equation}
where the individual Lie algebra generators have been replaced with linear combinations of generators. In practice, computing sums of fragments is much cheaper than products ($H_iH_j$ for the effective Hamiltonian or $H_i \ket{\psi}$ for perturbation error), so this tradeoff leads to large performance gains.

\subsubsection{The symmetric BCH formula}\label{sec:symm_bch}
The symmetric BCH formula solves the equation $e^{B/2}e^Ae^{B/2} = e^C$ expressing $C$ as an infinite sum of nested commutators.
The advantage of the symmetric BCH formula is that all commutators of even order cancel out, which is crucial in deriving our compact representations. In practice, most product formulas of interest are symmetric (such as the Trotter-Suzuki hierarchy) and benefit from these cancellations.
The symmetric BCH expansion up to fifth order commutators is given by the following.

\begin{equation}\label{eq:symm_bch}
\begin{aligned}
\BCH (A, B) &= \log \left( e^{B/2} e^A e^{B/2} \right) \\
&= A + B - \frac{\left[\left[A,B\right],A\right]}{12} - \frac{\left[\left[A,B\right],B\right]}{24} - \frac{\left[\left[\left[A,B\right],A\right],\left[A,B\right]\right]}{360} + \frac{\left[\left[\left[A,B\right],B\right],\left[A,B\right]\right]}{480}\\
&+ \frac{\left[\left[\left[\left[A,B\right],A\right],A\right],A\right]}{720} + \frac{\left[\left[\left[\left[A,B\right],B\right],A\right],A\right]}{180} + \frac{7 \left[\left[\left[\left[A,B\right],B\right],B\right],A\right]}{1440} + \frac{7 \left[\left[\left[\left[A,B\right],B\right],B\right],B\right]}{5760}
 + \dots.\\
\end{aligned}
\end{equation}

We need to compute the symmetric BCH expansion on the full set of generators, and put them in the form given by \cref{eq:compact_comm}. To do this, we apply the symmetric BCH expansion recursively, given by

\begin{align}
\BCH( H_1, \dots, H_n) &\coloneqq \log \left(e^{H_n/2}\ldots e^{H_2/2}e^{H_1}e^{H_2/2}\ldots e^{H_n/2}\right) \\
&= \BCH\left( \BCH( \ldots\BCH(H_1, H_2), \dots,) H_n \right). \label{eq:bch_recurse}
\end{align}

By recursively applying \cref{eq:bch_recurse} and applying the bilinearity of commutators, $[A, B] + [A, C] = [A, B + C]$, we obtain our compact representations presented in \cref{sec:reduce_comms}.

\subsubsection{Recursively defined product formulas}
We extend the approach of \cref{sec:symm_bch} to recursively defined product formulas. As an example, we focus on the Trotter-Suzuki hierarchy, but in practice this approach will apply to any recursively defined symmetric product formula. Recall the Trotter-Suzuki hierarchy given by 
\begin{align}
\mathcal{U}_2(t) &= \left( \prod_{i=n}^1 e^{tH_i / 2} \prod_{i=1}^n e^{tH_i / 2} \right)\\
\mathcal{U}_{2k}(t) &= \mathcal{U}_{2k - 2} \left( u_kt \right)^2
\mathcal{U}_{2k -2} \left( \left( 1 - 4u_k \right) t \right) 
\mathcal{U}_{2k-2} \left( u_kt \right)^2,
\end{align}
where $u_k = 1 / \left( 4 - 4^{1 / (k+1)} \right)$.
The naive approach to computing $\log \left(\mathcal{U}_{2k}(t) \right)$ is to simply expand and apply \cref{eq:bch_recurse} as
\begin{equation}
    \log \left( \mathcal{U}_{2k}(t) \right) = \log \left( e^{t\alpha_1 H_{i_1}} \dots e^{t \alpha_m H_{i_m}} \right)\\
    = \BCH \left(t \alpha_1 H_{i_1}, \dots, t \alpha_m H_{i_m} \right),
\end{equation}
but this requires computing $\BCH$ on $m = 5^{k-1} n$ elements of the Lie algebra, which is infeasible even for small values of $k$. Instead, we work in the free Lie algebra generated by the product formulas $\mathcal{V}_1 \coloneqq \mathcal{U}_{2k-2}(u_kt)^2$ and $\mathcal{V}_2 \coloneqq \mathcal{U}_{2k-2}((1-4u_k)t)$ and compute
\begin{equation}\label{eq:bch_pf}
    \log \left( \mathcal{U}_{2k}(t) \right) = \BCH \left( 2 \mathcal{V}_1, \mathcal{V}_2 \right),
\end{equation}
then recursively compute $\log ( \mathcal{V}_1)$ and $\log(\mathcal{V}_2)$, substituting $\mathcal{V}_1$ and $\mathcal{V}_2$ for their expansions in the expression for $\log \left( \mathcal{U}_{2k}(t) \right)$. The base case is $\mathcal{U}_2$ which is computed by applying \cref{eq:bch_recurse}.\\

The BCH implementation in PennyLane leverages recursively defined product formulas for performance with caching. A product formula in the form $\mathcal{P}_0 \mathcal{P}_1 \mathcal{P}_0 \mathcal{P}_1 \mathcal{P}_0$ is computed by calling $\BCH(2 \mathcal{P}_0, 2 \mathcal{P}_1, \mathcal{P}_0)$, and we can cache the results from the first $\log ( \mathcal{P}_0 )$ computation to be used later for the second computation.

\subsubsection{Reducing the number of commutators}\label{sec:reduce_comms}
In this section we present our compact representations for the third and fifth order commutators of the BCH expansion.
Let $Y_{k,n}$ denote the symmetric BCH expansion on $n$ elements restricted to just the order $k$ commutators. For $H = \sum_{i=1}^n H_i$, we express the linear order term in the effective Hamiltonian as
\begin{equation}
    Y_{1,n} = H_n + Y_{1,n-1} = \sum_{i=1}^n H_i.
\end{equation}
Summing over $k$ we obtain the full symmetric BCH expansion
\begin{equation}
    \text{BCH}(H_1, \dots, H_n) = \sum_{k=1}^\infty Y_{k, n}.
\end{equation}
In order to count the number of order $k$ commutators in the symmetric BCH expansion we need to obtain an expression for $Y_{k, n}$.\\ 

We obtain explicit expressions for $Y_{3, n}$ and $Y_{5, n}$. Similar expressions can be found for higher orders following the same method. Our starting point is the symmetric BCH expansion in the Hall basis given in \cref{eq:symm_bch}; we recursively apply \cref{eq:bch_recurse} and apply the bilinearity of the commutator $[A, B] + [A, C] = [A, B + C]$. The final expression is a sum of commutators in the form of \cref{eq:compact_comm}. For $k=3$ we obtain the expression
\begin{equation}
\begin{aligned}
Y_{3,n} &= Y_{3,n-1} - \frac{\left[ \left[ Y_{1,n-1}, H_n \right], Y_{1, n-1} \right] }{12} - \frac{\left[ \left[ Y_{1,n-1}, H_n \right], H_n \right] }{24} \\
&= Y_{3, n-1} - \frac{\left[ \left[ \sum_{i=1}^{n-1} H_i, H_n \right], \sum_{i=1}^{n-1} H_i \right] }{12} - \frac{\left[ \left[ \sum_{i=1}^{n-1} H_i, H_n \right], H_n \right] }{24}.
\end{aligned}
\end{equation}
Each fragment adds two commutators to the expression of $Y_{3, n}$, making the total number of commutators $\bigO(n)$, a significant reduction from $\bigO(n^3)$.
The same technique yields the following expression for $k=5$,
\begin{equation}\label{eq:Y_5n}
\begin{aligned}
    Y_{5,n} &= Y_{5,n-1}  - \frac{\left[\left[Y_{3,n-1}, H_n\right],H_n\right]}{24} 
    - \frac{\left[\left[Y_{3,n-1},H_n\right],\sum_{i=1}^{n-1} H_i \right]}{12}
    - \frac{\left[\left[\sum_{i=1}^{n-1} H_i,H_n\right],Y_{3,n-1} \right]}{12}\\
    &+ \frac{\left[\left[\left[\left[\sum_{i=1}^{n-1} H_i,H_n\right],\frac{1}{4}\sum_{i=1}^{n-1} H_i + H_n\right],\sum_{i=1}^{n-1} H_i\right],\sum_{i=1}^{n-1} H_i\right]}{180}\\
    &+ \frac{7 \left[\left[\left[\left[\sum_{i=1}^{n-1} H_i,H_n\right],H_n\right],H_n\right],\sum_{i=1}^{n-1} H_i + \frac{1}{4}H_n\right]}{1440}\\
    &- \frac{\left[\left[\left[\sum_{i=1}^{n-1} H_i,H_n\right],\frac{1}{3}\sum_{i=1}^{n-1} H_i - \frac{1}{4}H_n\right],\left[\sum_{i=1}^{n-1} H_i,H_n\right]\right]}{120}.
\end{aligned}
\end{equation}
In $Y_{5,n}$ there are commutators containing $Y_{3, n}$, leading to $\bigO(n^2)$ commutators in total --- a massive reduction from $\bigO(n^5)$. The implementation in PennyLane constructs $Y_{3, n}$ exactly and empirically matches the asymptotic growth of $Y_{5, n}$.\\

\noindent This reduction was crucial for obtaining the results presented in \cref{sec:applications} for two reasons:
\begin{enumerate}
    \item Our algorithm significantly decreases the time needed to compute the BCH expansion of a product formula.
    \item Our algorithm significantly decreases the number of commutators needed to evaluate to the Trotter error.
\end{enumerate}
\cref{fig:fourthorder} shows these reductions for a fourth-order Trotter formula.

\begin{figure*}
    \centering
    \begin{overpic}[width=0.45\linewidth, percent]{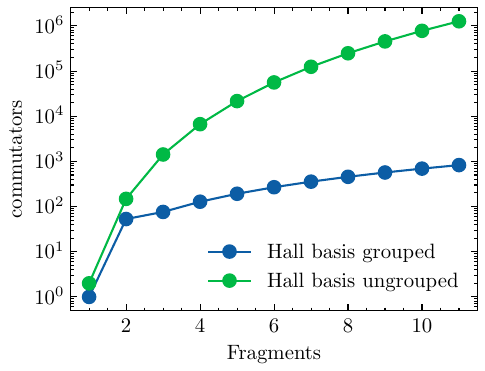}
        \put(1.,73){(a)}
    \end{overpic}
    \quad
    \begin{overpic}[width=0.45\linewidth, percent]{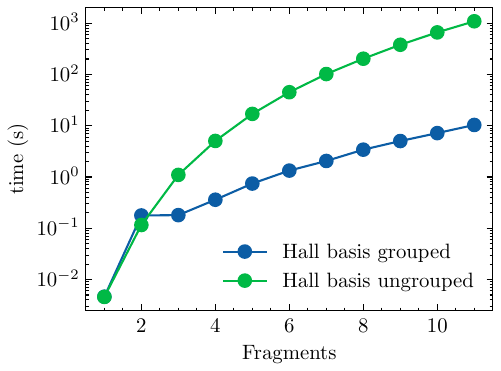}
        \put(1,73){(b)}
    \end{overpic}
    \caption{Computing the BCH expansion of fourth order Trotter-Suzuki using the \textsf{bch\_expansion} function provided by PennyLane's Trotter error module. Comparing the algorithm with and without the commutator reduction method of \cref{sec:reduce_comms}. Figure (a) plots the number of commutators returned by BCH with and without grouping and Figure (b) plots the time needed to compute the BCH expansion. }
    \label{fig:fourthorder}
\end{figure*}

\subsection{Importance sampling}
In practice the fragments of the Hamiltonian $H = \sum_{i=1}^n H_i$ are not uniformly weighted. Fragments with higher norms contribute more to the Trotter error than ones with smaller norms. As a result, the commutators of the BCH expansion are not of uniform importance and we can obtain a good approximation of the Trotter error by only considering the commutators of high importance. We thus allow strategies that only sample the top $k$ most important commutators and compute the error estimate from them. This importance sampling approach and the commutator reduction technique in \cref{sec:BCH_expansion} complement each other: sampling the top $k$ commutators in the form of \cref{eq:compact_comm} is equivalent to sampling the top $k$ linear combinations of commutators from the original BCH expansion, leading to a faster convergence, as shown in \cref{fig:convergence_commutators}.\\

The importance sampling heuristic is implemented in the following way. Assume we are given weights for the fragments of $H$ as a function $w \colon \{H_1, \dots, H_n\} \rightarrow \mathbb{R}^+$. A typical choice for $w$ would be the spectral norm. We extend $w$ to a function on the free Lie algebra generated by $H_1, \dots, H_n$. First, we extend $w$ linearly such that $w \left( \sum_{i=1}^n \alpha_i H_i \right) = \sum_{i=1}^n \alpha_i w(H_i)$. Then we extend $w$ to commutators by
\begin{equation}
w \left( \left[ \sum_{i=1}^n \alpha_{1,i} H_i, \dots, \sum_{i=1}^n \alpha_{k,i} H_i \right]_T \right) = 2^{k-1} \prod_{i=1}^k \sum_{j=1}^n \alpha_{j,i} w(H_j).
\end{equation}
This heuristic is motivated by the fact that when $w$ is a norm we can use the inequality $\|[A, B]\| \leq 2\|A\|\|B\|$. In \cref{sec:applications} we demonstrate the usefulness of the heuristic in practice.

\begin{figure*}[t!]
    \centering
    \begin{overpic}[width=0.45\linewidth, percent]{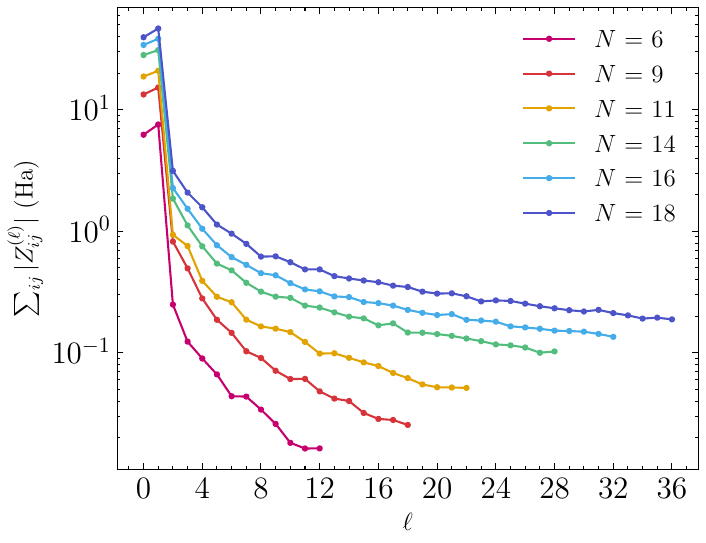}
        \put(1,73){(a)}
    \end{overpic}
    \hspace{0.05\linewidth}
    \begin{overpic}[width=0.45\linewidth, percent]{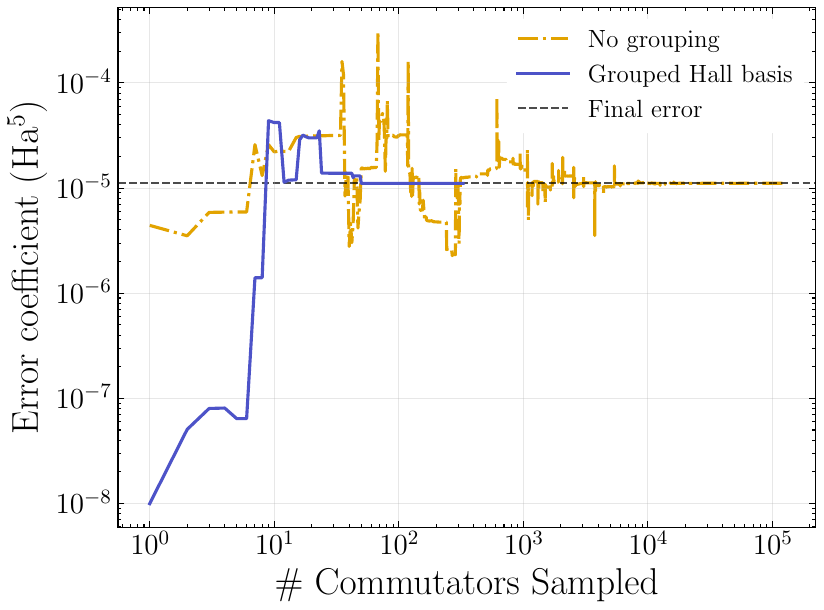}
        \put(1,73){(b)}
    \end{overpic}
    \caption{(a) Frobenius norms of the CDF coefficient matrices $Z^{(\ell)}$ for the Li$_4$Mn$_2$O Hamiltonian. The rapid decay underpins the effectiveness of importance sampling: commutators involving high-norm fragments dominate the Trotter error. (b) Importance sampling convergence for  the case of $N = 6$ with a fourth-order Suzuki formula. We show the estimated error as a function of the number of commutators sampled in order of decreasing importance with and without the grouping methodology introduced in \cref{sec:reduce_comms}. This demonstrates the complementary nature of the grouping method with importance sampling.}
    \label{fig:convergence_commutators}
\end{figure*}
\section{Applications}\label{sec:applications}

The value of a Trotter error estimation framework is ultimately measured by the scale of the practical problems it can tackle. In this section we apply our methods pipeline to two physically motivated applications: (i) X-ray absorption spectroscopy of an electronic Hamiltonian, which utilizes commutator grouping, importance sampling, and tensor network evaluation of perturbation error in concert; and (ii) state population dynamics under a vibronic Hamiltonian, where we combine our techniques with state-of-the-art classical packages for dynamics to push the limits of exact (state and observable)-dependent Trotter error up to systems requiring 100+ qubits.

\subsection{Spectroscopy for electronic Hamiltonians}\label{sec:electronic}

We first consider the X-ray absorption spectroscopy (XAS) of Li$_4$Mn$_2$O, a lithium-excess transition-metal oxide cluster that serves as a model for battery cathode materials~\cite{fomichev2024simulating,fomichev2025fast}. This is the same system studied in~\cite{casares2026theory}, where the Trotter error estimates produced by our methods fed directly into fault-tolerant resource estimates and contributed to an overall $\times 4.5$ reduction in Toffoli gate count. Prior perturbation-theory-based Trotter error analyses for electronic Hamiltonians have relied on operator-norm bounds extrapolated from small active spaces~\cite{fomichev2025fast}, or were limited to exact diagonalization at $\lesssim 20$ qubits. Here, we evaluate the actual leading-order perturbative error directly at the target system size, for active spaces of up to $N = 28$ spatial orbitals (56 qubits) -- the largest system for which such estimates have been obtained.\\

In spectroscopy tasks the Trotter error manifests as energy shifts relative to the true spectrum~\cite{malpathak2025trotter} due to evolution under an effective Hamiltonian arising from the product formula. We decompose the electronic Hamiltonian using Compressed Double Factorization (CDF)~\cite{cohn2021quantum}, which approximates the two-electron integrals as
\begin{align}\label{eq:CDF}
   (pq|rs) \approx \sum_{\ell=1}^L \sum_{k,l=1}^N U^{(\ell)}_{pk} U^{(\ell)}_{qk} Z^{(\ell)}_{kl} U^{(\ell)}_{rl} U^{(\ell)}_{sl},
\end{align}
yielding $L$ Hamiltonian fragments whose Frobenius norms $\|Z^{(\ell)}\|_F$ decay rapidly with $\ell$ (\cref{fig:convergence_commutators}). Given a $p$th-order product formula with effective Hamiltonian $H_{\mathrm{eff}} = H + \Delta t^p\,\mathcal{E} + O(\Delta t^{p+1})$, first-order perturbation theory gives the eigenvalue shift
\begin{equation}\label{eq:perturbation_eigenvalues}
    E'_l = E_l + \Delta t^p \bra{E_l}\mathcal{E}\ket{E_l} + O(\Delta t^{p+1}),
\end{equation}
where $\ket{E_l}$ are eigenstates of $H$ with eigenvalues $E_l$. The computational task is to evaluate $\bra{E_l}\mathcal{E}\ket{E_l}$, where $\mathcal{E}$ is a linear combination of nested commutators of the Hamiltonian fragments returned by our BCH expansion module.\\

Let us first consider the naive approach: expand the BCH series for a $p$th-order product formula on $L$ fragments, then evaluate every nested commutator expectation value individually. For the second-order Trotter formula, the leading error operator $\mathcal{E}$ contains $O(L^3)$ third-order commutators, and for the fourth-order Trotter formula it contains $O(L^5)$ fifth-order commutators. With CDF the number of fragments scales as $L \sim N$, so at the largest active space ($N = 28$), the fourth-order formula requires evaluating on the order of $28^5 \approx 1.7 \times 10^7$ commutator expectation values with respect to approximate eigenstate living in a Hilbert space of dimension $2^{56}$. This is completely intractable using previous approaches. We now trace how three innovations combine to make this evaluation tractable at unprecedented scale.\\

\paragraph{Commutator grouping.}
The grouping algorithm of \cref{sec:BCH_expansion} replaces this brute-force expansion with a far more compact one by collecting commutators of individual fragments into commutators of \emph{partial sums} of fragments. This lets us trade expensive operator--state multiplications with cheap operator--operator additions, so the grouped expression has strictly fewer terms that require taking an expectation value with respect to approximate eigenstates. For the second-order formula, the number of commutator-state products drops from $O(L^3)$ to $O(L)$; for the fourth-order formula, from $O(L^5)$ to $O(L^2)$ -- a cubic reduction. At $N = 28$ this takes $\sim\!10^7$ evaluations down to $\sim\!10^3$ for the fourth-order formula, a reduction by four orders of magnitude. Without grouping, applying a fourth-order formula to any system with more than $\sim\!10$ fragments is out of reach; with it, systems with tens of fragments become tractable.\\

\paragraph{Importance sampling.}
Even after grouping, evaluating all $O(L^2)$ grouped commutators is still expensive because each requires taking an expectation value in an exponentially large space. Importance sampling eliminates most of this remaining work by exploiting the hierarchical norm structure of the CDF: because the Frobenius norms $\|Z^{(\ell)}\|_F$ decay rapidly \cref{fig:convergence_commutators}(a), the commutator contributions follow a heavy-tailed distribution in which a small fraction of terms dominates the total error. We assign each grouped commutator an importance derived from the fragments' norms and evaluate only the top-$k$ contributors. \Cref{fig:convergence_commutators}(b) shows that for the $N = 6$ fourth-order Suzuki case, the top $\sim\!50$ grouped commutators (out of several hundred) capture the full error to high precision. Importance sampling also compounds with grouping: each sampled grouped commutator represents a linear combination of many raw BCH terms, so evaluating $k$ grouped commutators effectively captures far more than $k$ terms from the original expansion.\\

\paragraph{Tensor-network evaluation.}
The final bottleneck is evaluating the surviving commutator expectation values. At $N = 28$ the Hilbert space has dimension $2^{56}$, far beyond exact diagonalization. We address this through an interface which allows any data type to be plugged into the pipeline that implements basic operator--operator arithmetic and operator--state multiplication. For electronic Hamiltonians we provide a backend built on the block2 tensor-network library~\cite{zhai2023block2}, representing approximate eigenstates as matrix product states (MPS) obtained via DMRG and nested commutators as matrix product operators (MPO).  To compute any matrix element of a commutator, we unroll the commutator into a linear combination of products of operators. For instance,
\begin{equation}
    \langle[A, B]\rangle = \langle AB\rangle - \langle BA\rangle,
\end{equation}
and each expectation value is evaluated by contracting MPOs and MPSs. Starting with bond dimensions $\chi_{MPO}$ and $\chi_{MPS}$ for the MPOs and MPSs, we approximate the resulting MPS with a bond dimension
\begin{equation}
    \chi = \sqrt{\chi_{MPO}^2 + \chi_{MPS}^2}.
\end{equation}
We find that bond dimensions ranging from $\chi = 25$ to $\chi=100$ yields well-converged error estimates, with $\chi = 25$ estimates closely resembling those of $\chi = 100$.\\

This newly found ability to accurately estimate Trotter error directly at the target scale, rather than using pessimistic bounds or extrapolations enables two key contributions: a more fair comparison between simulation algorithms based on product formulas and those based on qubitization, and the enablement of rational design of product formulas by allowing a more accurate comparison between different product formula constructions.

\subsection{Dynamics for vibronic Hamiltonians}\label{sec:vibronic}

In this section we discuss how our methods for obtaining the effective Hamiltonian under a Trotterized product formula can be combined with state-of-the-art classical simulation methods such as multilayer multiconfiguration time-dependent Hartree (ML-MCTDH)~\cite{wang2003multilayer, wang2015multilayer} and density matrix renormalization group (DMRG) \cite{schollwock2011density}, to tightly estimate Trotter error for dynamical problems at scales far beyond what was previously possible. Specifically, we first show how DMRG can be used in conjunction with our methods to improve bounds on Trotter runtime estimates by several orders of magnitude as compared to bounds derived based on triangle inequality. Next we show how to further improve the bounds via direct simulation of (state and observable)-dependent Trotter error for system sizes beyond 100 qubits, a task completely infeasible previously.\\

As an example, we focus on the task of computing the time-dependent electronic state populations under a vibronic Hamiltonian describing the coupled dynamics of electronic and nuclear degrees of freedom of a molecule \cite{kouppel1984multimode, domcke2004conical}. We consider a second-order Trotter product formula based on the fragmentation scheme of Ref.~\cite{motlagh2024quantum} for a vibronic Hamiltonian of the form
\begin{align}
    H &= \mathbb{I}_{\mathrm{el}} \otimes \frac{1}{2}\sum_{r}^{M} \omega_r (P_r^2 + Q_r^2) \,\, + \,\,\sum_{i,j}^N\sum_{r}^{M} \,\ket{i}\!\bra{j}_{\mathrm{el}} \otimes \left( \alpha_r^{(i,j)} \,Q_r + \beta_r^{(i,j)} \,Q_r^2 \right),
\end{align}
where $Q_r$ and $P_r$ are the position and momentum operators for the $r^{\mathrm{th}}$ vibrational mode with frequency $\omega_r$. In this setting, the observables of interest correspond to the time-dependent population of each electronic state $\ket{i}$, and take the form
\begin{align}\label{eq:state_pop_obs}
    O_i = \ket{i}\!\bra{i}_{\mathrm{el}} \otimes \mathbb{I}_{\mathrm{nuc}}.
\end{align}
As a concrete system, we analyze a vibronic model of naphthalene with 2 electronic state and all 48 vibrational modes. Naphthalene is the prototypical polycyclic aromatic hydrocarbon, and its vibronically-coupled $S_1$ and $S_2$ $\pi\pi^*$ states drive ultrafast, non-adiabatic internal conversion ~\cite{montero2010coherent}. Such photophysics are of relevance to organic optoelectronics, such as OLED emitters and organic semiconductors~\cite{liu2015high}. We systematically assess how the Trotter error scales with system size by constructing a hierarchy of reduced-dimensionality models by ranking the 48 vibrational normal modes according to their coupling strengths. Each mode is discretized into 32 gridpoints, meaning an $M$-mode reduced-dimensionality model corresponds to a $(5M +1)$-qubit system. We use our methods to obtain the effective Hamiltonian $H'= H + \Delta t^d\, \mathcal{E}$ for the workflows described below.

\begin{figure}[t]
    \centering
    \includegraphics[width=0.9\linewidth]{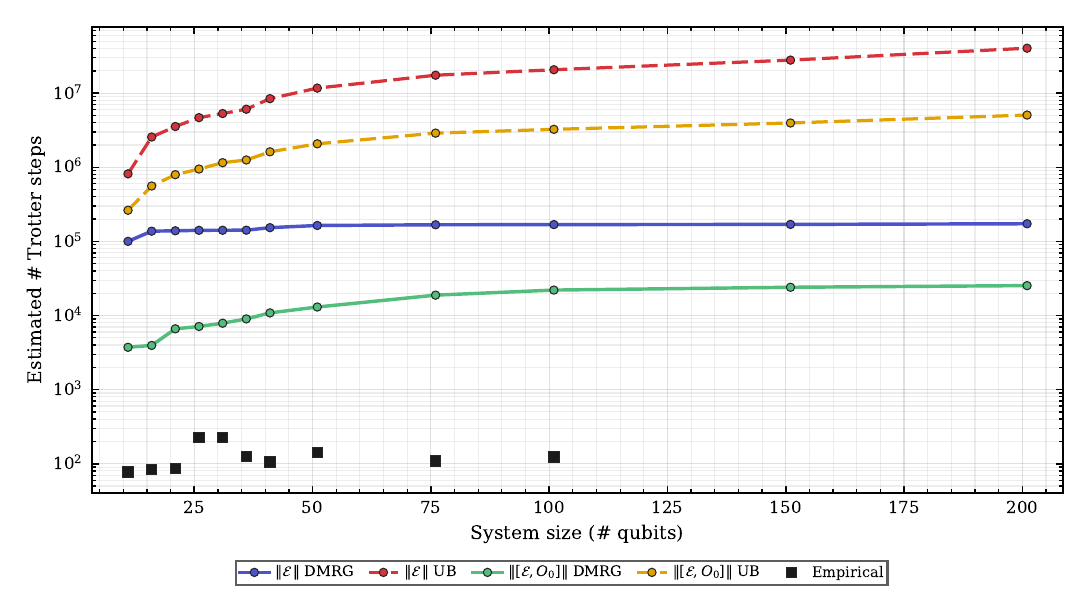}
    \caption{%
        Number of second-order Trotter steps $r$ required to achieve
        a maximum state-population error of $\leq 0.01$ over $100\,\mathrm{fs}$
        of evolution time, as a function of qubit count ($5M + 1$ qubits for $M$ vibrational modes),
        for the reduced-dimensionality naphthalene vibronic models.
        Five estimation methods are compared:
        the naive triangle-inequality upper bounds (UB) on $\|\mathcal{E}\|$ and $\|[\mathcal{E}, O_0]\|$,
        the DMRG-computed spectral norms of $\mathcal{E}$ and $[\mathcal{E}, O_0]$,
        and the empirical estimate from direct ML-MCTDH simulations. The DMRG spectral norms provide bounds orders of magnitude tighter
        than the naive upper bounds, while the direct dynamical estimate
        yields the most realistic assessment of the required number of Trotter steps.}
    \label{fig:vibronic_steps_vs_qubits}
\end{figure}
\subsubsection{Tighter bounds via DMRG}

Given an accuracy requirement $\epsilon$ for the simulation, upper bounds on the number of Trotter steps are typically derived based on the norm of $\mathcal{E}$. For observable-independent bounds this corresponds to finding a $\Delta t$ such that
\begin{equation}
    t\,\Delta t^d\,\|\mathcal{E}\|\leq \epsilon,
\end{equation}
and for observable-dependent bounds it corresponds to
\begin{equation}
    t\,\Delta t^d\,\|[\mathcal{E}, O]\|\leq \epsilon.
\end{equation}
However, since $\mathcal{E}$ lives in an exponentially large Hilbert space where direct diagonalization is inaccessible, its norm is typically bounded via repeated applications of the triangle inequality $\|\mathcal{E}\|\leq \sum_i \|\mathcal{E}_i\|$ for terms $\mathcal{E}_i$ coming from the BCH expansion, whose norms are known analytically. This leads to a gross overestimation of $\|\mathcal{E}\|$ and thus the number of Trotter steps. To avoid this, we instead use DMRG to directly estimate $\|\mathcal{E}\|$, and find an improvement in the estimated runtime by two orders of magnitude for the system studied here, as seen in \cref{fig:vibronic_steps_vs_qubits}.\\

These improved bounds, while orders of magnitude tighter than naive triangle inequalities, remain state-independent and thus capture only the worst-case scenario corresponding to unphysical states of very high energy. In the next section we describe how we can further reduce runtime estimates by additional orders of magnitude when taking into account the initial state in the dynamical problem of interest.

\subsubsection{Empirical estimates via ML-MCTDH}

Direct estimation of (state and observable)-dependent Trotter error is the most accurate method for estimating resource requirements for a particular simulation problem of interest. More specifically, given an initial state $\ket{\psi_0}$, we wish to find the maximal step size $\Delta t$ such that
\begin{equation}\label{eq:state_dep_err}
    \sup_{\tau\leq t} \left|  \bra{\psi_0}\, e^{iH\tau}\, O\, e^{-iH\tau}\,\ket{\psi_0}
    -
    \bra{\psi_0}\, e^{iH'\tau}\, O\, e^{-iH'\tau}\ket{\psi_0}\right|\leq \epsilon.
\end{equation}
This is typically inaccessible for systems beyond 20 qubits using state-vector simulations. Instead we employ the state-of-the-art classical method for simulating dynamics, ML-MCTDH, to empirically probe (state and observable)-dependent Trotter error directly for systems with over 100 qubits. We use the \texttt{pyTTN} package~\cite{lindoy2025pyttn} implementation of ML-MCTDH to time propagating $\ket{\psi_0}$ under both the original Hamiltonian $H$ and the effective $H'$ produced by our methods, and compare the difference in the resulting state populations. This not only allows us to obtain significantly smaller estimates on the runtime of the Trotter algorithm but also leads to the insight that the number of required Trotter steps does not necessarily grow monotonically with system size, as suggested by norm-based bounds. We plot the estimated runtime using ML-MCTDH against bounds derived via DMRG and naive triangle inequality in \cref{fig:vibronic_steps_vs_qubits}. At the end, we find that the naive triangle inequality overestimates the number of Trotter steps by almost five orders of magnitude at 20 modes with 32 grid points per mode (100+ qubits).\\

\section{Conclusion}

We have presented a comprehensive framework for the practical estimation of Trotter error, advancing the state-of-the-art in both theoretical understanding and computational capability. Our contributions span three interconnected directions: new theory that sharpens the asymptotic scaling of product formulas, algorithms that drastically reduce the cost of computing the BCH error operator, and a software implementation in PennyLane that interfaces with state-of-the-art classical simulation methods to push Trotter error estimation to unprecedented system sizes.\\

On the theoretical side, we proved that in the asymptotic limit, the error of a product formula is governed by the diagonal elements of the BCH error operator $\mathcal{E}$ in the eigenbasis of the Hamiltonian, rather than its full spectral norm. This yields an improved scaling for Hamiltonian simulation, replacing the dependence on $\|\mathcal{E}\|$ with $\|D(\mathcal{E})\|$, and provides a rigorous foundation for perturbation-theory-based error estimates that were previously only heuristic.\\

On the algorithmic side, our compact representation of the BCH expansion reduces the number of commutators from $\bigO(n^3)$ to $\bigO(n)$ for second-order and from $\bigO(n^5)$ to $\bigO(n^2)$ for fourth-order product formulas on $n$ fragments. Complemented by our importance sampling scheme, these reductions make it feasible to compute tight Trotter error for product formulas on Hamiltonians with tens of fragments, a regime previously out of reach for fourth-order formulas.\\

On the applications side, we demonstrated the usefulness of our framework on two physically motivated problems. For X-ray absorption spectroscopy of Li$_4$Mn$_2$O, we obtained tight perturbative error estimates at up to 56 qubits using tensor networks---the largest system for which such estimates have been produced. For vibronic dynamics of naphthalene, we combined our software with ML-MCTDH to directly probe state- and observable-dependent Trotter error at over 100 qubits, a task completely infeasible with previous state-vector simulation techniques, revealing that naive analytical bounds overestimate the required number of Trotter steps by nearly five orders of magnitude.\\

These results have immediate practical implications. First, they enable fair comparisons between Trotter-based and QSP-based simulation algorithms by providing realistic---rather than pessimistic---resource estimates for product formulas. Second, they open the door to the rational design of product formulas, where different constructions and orderings can be evaluated and optimized based on accurate error predictions rather than loose upper bounds. Third, the modularity of our software ensures that our framework can be readily extended to new Hamiltonian types and classical simulation backends as they become available.\\

\bibliography{main}
\appendix

\section{Perturbation theory proofs}\label{app:PT_proofs}

\begin{proof}[Proof of \cref{thm:operator_limit}]
    Consider the Dyson series expansion of $e^{i(tH+ \mathcal{E})}= e^{itH}\cdot U(1, 0; t)$, where
    \begin{align}\label{eq:bigboy}
        U(1, 0; t) =& \Biggl(\,\mathbb{I} + i\int_{0}^1 dt_1e^{-itHt_1}\mathcal{E}e^{itHt_1}+ \left(i\right)^2\int_{0}^1 dt_1 \int_{0}^{t_1} \, dt_2 e^{-itHt_1}\mathcal{E}e^{itHt_1}e^{-itHt_2}\mathcal{E}e^{itHt_2} \\
        &\cdots+ \left(i\right)^k\int_{0}^1 dt_1\int_{0}^{t_1} dt_2 \cdots \int_{0}^{t_{k-1}} dt_k\,e^{-itHt_1}\mathcal{E}e^{itHt_1} \cdots e^{-itHt_k}\mathcal{E}e^{itHt_k} +\cdots\Biggr).
        \end{align}
        We define $U_k(t)$ for convenience as
        \begin{equation}
            U_k(t) = \left(i\right)^k\int_{0}^1 dt_1\int_{0}^{t_1} dt_2 \cdots \int_{0}^{t_{k-1}} dt_k\,e^{-itHt_1}\mathcal{E}e^{itHt_1} \cdots e^{-itHt_k}\mathcal{E}e^{itHt_k},
        \end{equation}
        and $U_0(t) = \mathbb{I}$. Thus 
        \begin{equation}
            e^{i(tH+ \mathcal{E})} = e^{itH}\,\sum_{k=0}^\infty U_k(t) .
        \end{equation}
        We wish to prove
        \begin{align}
            \lim_{t\to \infty} e^{i(tH+ \mathcal{E})} = e^{i(tH+ D(\mathcal{E}))},
        \end{align}
        where $D(\mathcal{E})= \sum_n \bra{\lambda_n}\mathcal{E}\ket{\lambda_n}\, \ket{\lambda_n}\bra{\lambda_n} = \sum_n \mathcal{E}_{nn}\, \ket{\lambda_n}\bra{\lambda_n}$. Since $[H, D(\mathcal{E})]=0$, $e^{i(tH+ D(\mathcal{E}))}= e^{itH}e^{iD(\mathcal{E})}$. Therefore, it suffices to prove
        \begin{equation}
            \lim_{t\to \infty} \sum_{k=0}^\infty U_k(t) = e^{iD(\mathcal{E})},
        \end{equation}
        or equivalently for all eigenstates $\ket{\lambda_n}$ of $H$
        \begin{equation}
            \lim_{t\to \infty} \sum_{k=0}^\infty U_k(t) \ket{\lambda_n}= e^{i\mathcal{E}_{nn}}\ket{\lambda_n}.
        \end{equation}
        Let $\Pi = \ket{\lambda_n}\bra{\lambda_n}$, then for each term in the sum, $U_k(t)$, we insert $k$ resolutions of identity $\mathbb{I} = \left(\mathbb{I} - \Pi \right) + \Pi $ after the first $k$ applications of $\mathcal{E}$
        \begin{align}
           \sum_{k=1}^\infty & U_k(t)\ket{\lambda_n} = \sum_{k=1}^\infty\sum_{\vec{J}\in \{0,1\}^{k-1}} \Phi_k(\vec{J}, t)\ket{\lambda_n},
        \end{align}
        For
        \begin{align}
            \Phi_k(\vec{J}, t) := \left(i\right)^k\int_{0}^1 dt_1\int_{0}^{t_1} dt_2 \cdots \int_{0}^{t_{k-1}} dt_k\,e^{-itHt_1}\left(\mathbb{I} -\Pi\right)^{(1-J_1)} \Pi^{J_1}\mathcal{E}e^{itHt_1} \cdots e^{-itHt_k}\left(\mathbb{I} -\Pi\right)^{(1-J_{k})} \Pi^{J_{k}}\mathcal{E}e^{itHt_k}.
        \end{align}
        Separating the case of $\vec{J}=\vec{1}$ from $\vec{J}\neq\vec{1}$ (where $\vec{1}$ is the all ones vector), we get
        \begin{align}
           \sum_{k=1}^\infty & U_k(t)\ket{\lambda_n} = \sum_{k=1}^\infty \Phi_k(\vec{1}, t)\ket{\lambda_n} + \sum_{k=1}^\infty\sum_{\substack{\vec{J}\in \{0,1\}^{k-1}\\ \vec{J}\neq\vec{1}}} \Phi_k(\vec{J}, t)\ket{\lambda_n}.
        \end{align}
        It's easy to check
        \begin{equation}
            \Phi_k(\vec{1}, t)\ket{\lambda_n} =  \frac{ \left(i\,\mathcal{E}_{nn}\right)^k}{k!} \ket{\lambda_n}.
        \end{equation}
        Thus
        \begin{align}
           \sum_{k=0}^\infty U_k(t)\ket{\lambda_n} &= \sum_{k=0}^\infty\frac{ \left(i\,\mathcal{E}_{nn}\right)^k}{k!}\,\ket{\lambda_n} + \sum_{k=1}^\infty\sum_{\substack{\vec{J}\in \{0,1\}^{k-1}\\ \vec{J}\neq\vec{1}}} \Phi_k(\vec{J}, t)\ket{\lambda_n},\\
           &= e^{i \,\mathcal{E}_{nn}}\ket{\lambda_n} + \sum_{k=1}^\infty\sum_{\substack{\vec{J}\in \{0,1\}^{k-1}\\ \vec{J}\neq\vec{1}}} \Phi_k(\vec{J}, t)\ket{\lambda_n}.
        \end{align}
        Hence all that remains to show is
        \begin{equation}
            \lim_{t\to\infty}\left\| \sum_{k=1}^\infty\sum_{\substack{\vec{J}\in \{0,1\}^{k-1}\\ \vec{J}\neq\vec{1}}} \Phi_k(\vec{J}, t)\ket{\lambda_n} \right\| = 0.
        \end{equation}
        We now bound $\left\|\Phi_k(\vec{J}, t)\ket{\lambda_n}\right\|$ for the case of $\Vec{J}\neq\Vec{1}$. Let $\Vec{J}\in \{0,1\}^{k-1}$ such that $\Vec{J}\neq\Vec{1}$. Then let $c$ be the highest index in $\Vec{J}$ such that $J_c=0$ (that is the first time we cross over to the $(\mathbb{I}-\Pi)$ subspace), we have
        \begin{align}
             \Phi_k(\vec{J}, t)\ket{\lambda_n} = \left(i\right)^k\mathcal{E}_{nn}^{k-c-1} &\int_{0}^1 dt_1e^{-itHt_1}\left(\mathbb{I} -\Pi\right)^{(1-J_1)} \Pi^{J_1}\mathcal{E}e^{itHt_1}\int_{0}^{t_1} dt_2 e^{-itHt_2}\left(\mathbb{I} -\Pi\right)^{(1-J_1)} \Pi^{J_1}\mathcal{E}e^{itHt_2} \cdots\nonumber\\ &\cdots \int_{0}^{t_{c-1}} e^{-itHt_{c}}\left(\mathbb{I} -\Pi\right)\mathcal{E}e^{itHt_{c}} \frac{t_{c}^{k-c}}{(k-c)!}\ket{\lambda_n} \,dt_{c},
        \end{align}
        thus
        \begin{align}
             \left\|\Phi_k(\vec{J}, t)\ket{\lambda_n}\right\| &\leq \|\mathcal{E} \|^{k-1}\int_{0}^1 dt_1\int_{0}^{t_1} dt_2 \cdots \left\|\int_{0}^{t_{c-1}} e^{-itHt_{c}}\left(\mathbb{I} -\Pi\right)\mathcal{E}e^{itHt_{c}} \frac{t_{c}^{k-c}}{(k-c)!}\ket{\lambda_n} \,dt_{c}\right\|.
        \end{align}
        Let us examine $\int_{0}^{t_{c-1}} e^{-itHt_{c}}\left(\mathbb{I} -\Pi\right)\mathcal{E}e^{itHt_{c}} \frac{t_{c}^{k-c}}{(k-c)!}\ket{\lambda_n} \,dt_{c}$
        \begin{align}
             \int_{0}^{t_{c-1}} e^{-itHt_{c}}\left(\mathbb{I} -\Pi\right)\mathcal{E}e^{itHt_{c}} \frac{t_{c}^{k-c}}{(k-c)!}\ket{\lambda_n} \,dt_{c} = \sum_{m\neq n} \frac{\mathcal{E}_{mn}}{(k-c)!} \ket{\lambda_m}\int_{0}^{t_{c-1}} e^{it(\lambda_n - \lambda_m)t_{c}} \cdot t_{c}^{k-c} \,dt_{c}
        \end{align}
        It is easy to check via integration by parts that
        \begin{equation}
            \Biggl|\int_{0}^{t_{c-1}} e^{it(\lambda_n - \lambda_m)t_{c}} \cdot t_{c}^{k-c} \,dt_{c}\Biggr|\leq \frac{2\cdot t_{c-1}^{k-c}}{t\cdot|\lambda_n - \lambda_m|}
        \end{equation}
        Let $\gamma = \min_{m\neq n} |\lambda_n - \lambda_m|$, then
        \begin{align}
            \left\|\int_{0}^{t_{c-1}} e^{-itHt_{c}}\left(\mathbb{I} -\Pi\right)\mathcal{E}e^{itHt_{c}} \frac{t_{c}^{k-c}}{(k-c)!}\ket{\lambda_n} \,dt_{c}\right\| &\leq \frac{2\cdot t_{c}^{k-c}}{t\cdot \gamma \cdot (k-c)!}\cdot \left\|\sum_{m\neq n} \mathcal{E}_{mn} \ket{\lambda_m}\right\|,\\
            &\leq \frac{2\cdot t_{c}^{k-c}}{t\cdot\gamma \cdot (k-c)!}\cdot \left\|\mathcal{E}\right\|.
        \end{align}
        Thus
        \begin{align}
             \left\|\Phi_k(\vec{J}, t)\ket{\lambda_n}\right\| &\leq  \frac{1}{t}\frac{2\cdot \left\|\mathcal{E}\right\|^k}{\gamma} \int_{0}^1 dt_1\int_{0}^{t_1} dt_2 \cdots \int_{0}^{t_{c-2}} \frac{t_{c-1}^{k-c}}{(k-c)!} \,dt_{c-1},\\
             &= \frac{1}{t} \cdot \frac{2\cdot \left\|\mathcal{E}\right\|^k}{\gamma \cdot (k-1)!}
        \end{align}
        Then
        \begin{align}
            \left\| \sum_{k=1}^\infty\sum_{\substack{\vec{J}\in \{0,1\}^{k-1}\\ \vec{J}\neq\vec{1}}} \Phi_k(\vec{J}, t)\ket{\lambda_n} \right\| &\leq  \sum_{k=1}^\infty2^k  \left\|\Phi_k(\vec{J}, t)\ket{\lambda_n} \right\|\\
            &\leq  \sum_{k=1}^\infty2^k \frac{1}{t} \cdot \frac{2\cdot \left\|\mathcal{E}\right\|^k}{\gamma \cdot (k-1)!}\\
            &\leq \frac{1}{t} \frac{4\cdot \left\|\mathcal{E}\right\|}{\gamma} \sum_{k=1}^\infty \frac{(2\left\|\mathcal{E}\right\|)^{k-1}}{(k-1)!}\\
            &= \frac{1}{t} \frac{4\cdot \left\|\mathcal{E}\right\|\cdot e^{2\left\|\mathcal{E}\right\|}}{\gamma}
        \end{align}
        Since $\frac{4\cdot \left\|\mathcal{E}\right\|\cdot e^{2\left\|\mathcal{E}\right\|}}{\gamma}$ is just a constant, then
        \begin{equation}
            \lim_{t\to\infty}\left\| \sum_{k=1}^\infty\sum_{\substack{\vec{J}\in \{0,1\}^{k-1}\\ \vec{J}\neq\vec{1}}} \Phi_k(\vec{J}, t)\ket{\lambda_n} \right\| = 0.
        \end{equation}
        Which completes our proof. However, it's worth noting that while the constant $\frac{4\cdot \left\|\mathcal{E}\right\|\cdot e^{2\left\|\mathcal{E}\right\|}}{\gamma}$ may imply logarithmic convergence to this limit, in practice this convergence happens much quicker due to 2 reasons: $\frac{4\cdot \left\|\mathcal{E}\right\|\cdot e^{2\left\|\mathcal{E}\right\|}}{\gamma}$ is a very loose upper bound on the residual since we applied many layers of triangle inequalities to simplify the proof. Secondly, when setting $\Delta t = \delta \cdot t^{-1/d}$, $\left\|\mathcal{E}\right\|$ is replaced with $\delta^d\left\|\mathcal{E}\right\|$ in the expression for the upper bound on the residual.\\
\end{proof}
\bigskip 
\begin{proof}[Proof of Corollary~\ref{cor:PF_limit}]
Setting $\Delta t = t^{-1/d}$ in the assumed relation $S_d(\Delta t) = e^{i\Delta t(H + \Delta t^d \cdot \mathcal{E})}$, we obtain
\begin{equation}
S_d(t^{-1/d}) = e^{i\, t^{-1/d}\left(H + t^{-1} \cdot \mathcal{E}\right)}.
\end{equation}
Then
\begin{align}
U(t) &= \left[S_d(t^{-1/d})\right]^{t^{(1+1/d)}} \nonumber\\
&= e^{i(tH + \mathcal{E})}.
\end{align}
The result then follows by direct application of \cref{thm:operator_limit}.
\end{proof}
\bigskip

\begin{proof}[Proof of \cref{thm:error_limit}]
We proceed in three parts corresponding to the three claims.

\paragraph{Part 1: Spectral norm bound.}
Setting $\Delta t = \delta \cdot t^{-1/d}$, we compute
\begin{align}
S_d(\delta \cdot t^{-1/d}) &= e^{i\,\delta\, t^{-1/d}(H + (\delta\, t^{-1/d})^d \cdot \mathcal{E})} \nonumber\\
&= e^{i\,\delta\, t^{-1/d}(H + \delta^d\, t^{-1} \cdot \mathcal{E})}.
\end{align}
The number of Trotter steps is
\begin{equation}
\frac{t}{\Delta t} = \frac{t}{\delta \cdot t^{-1/d}} = \frac{t^{(1+1/d)}}{\delta}.
\end{equation}
Therefore,
\begin{align}
U(t) &= \left[S_d(\delta \cdot t^{-1/d})\right]^{t^{(1+1/d)}/\delta} \nonumber\\
&= e^{i\,(t^{(1+1/d)}/\delta)\cdot \delta\, t^{-1/d}\cdot (H + \delta^d t^{-1}\mathcal{E})} \nonumber\\
&= e^{i\, t^{(1+1/d)}\cdot t^{-1/d}\cdot(H + \delta^d t^{-1}\mathcal{E})} \nonumber\\
&= e^{i\,t\,(H + \delta^d t^{-1}\mathcal{E})} \nonumber\\
&= e^{i(tH + \delta^d \mathcal{E})}.
\end{align}
Applying Theorem~1 with the bounded Hermitian operator $\delta^d \mathcal{E}$:
\begin{equation}\label{eq:asymp_U}
\lim_{t\to\infty} e^{i(tH + \delta^d \mathcal{E})} = e^{i(tH + \delta^d D(\mathcal{E}))}.
\end{equation}
Since $D(\mathcal{E})$ is diagonal in the eigenbasis of $H$, the operators $H$ and $D(\mathcal{E})$ commute, so
\begin{equation}
e^{i(tH + \delta^d D(\mathcal{E}))} = e^{itH}\, e^{i\delta^d D(\mathcal{E})}.
\end{equation}
Therefore,
\begin{align}
\lim_{t\to\infty}\left\|U(t) - e^{itH}\right\| &= \left\|e^{itH}\,e^{i\delta^d D(\mathcal{E})} - e^{itH}\right\| \nonumber\\
&= \left\|e^{itH}\left(e^{i\delta^d D(\mathcal{E})} - \mathbb{I}\right)\right\| \nonumber\\
&= \left\|e^{i\delta^d D(\mathcal{E})} - \mathbb{I}\right\|,
\end{align}
where the last step uses the unitary invariance of the spectral norm. For the upper bound, we use the standard inequality $\|e^{iA} - \mathbb{I}\| \leq \|A\|$ valid for any Hermitian operator $A$ (which follows from $|e^{i\theta} - 1| \leq |\theta|$ applied to each eigenvalue). Setting $A = \delta^d D(\mathcal{E})$:
\begin{equation}
\left\|e^{i\delta^d D(\mathcal{E})} - \mathbb{I}\right\| \leq \delta^d \|D(\mathcal{E})\|.
\end{equation}

\paragraph{Part 2: Eigenstate bound.}
Acting on an eigenstate $\ket{\lambda_n}$ and using Eq.~\eqref{eq:asymp_U}:
\begin{align}
\lim_{t\to\infty}\left\|\left(U(t) - e^{itH}\right)\ket{\lambda_n}\right\| &= \left\|\left(e^{itH} e^{i\delta^d D(\mathcal{E})} - e^{itH}\right)\ket{\lambda_n}\right\|, \nonumber\\
&= \left\|\left(e^{i\delta^d D(\mathcal{E})} - \mathbb{I}\right)\ket{\lambda_n}\right\|,\nonumber\\
&=|e^{i\delta^d \bra{\lambda_n}\mathcal{E}\ket{\lambda_n}} - 1|,\nonumber\\
&\leq \delta^d\cdot| \bra{\lambda_n}\mathcal{E}\ket{\lambda_n}|
\end{align}

\paragraph{Part 3: Observable bound.}
The Heisenberg-picture observable under exact evolution is $O(t) = e^{itH} O\, e^{-itH}$, and under the Trotterized evolution is $\tilde{O}(t) = U(t)\, O\, U(t)^\dagger$. Using Eq.~\eqref{eq:asymp_U} and its adjoint:
\begin{align}
\lim_{t\to\infty} \tilde{O}(t) &= e^{i(tH+\delta^d D(\mathcal{E}))}\, O\, e^{-i(tH+\delta^d D(\mathcal{E}))} \nonumber\\
&= e^{itH}\,e^{i\delta^d D(\mathcal{E})}\, O\, e^{-i\delta^d D(\mathcal{E})}\, e^{-itH}.
\end{align}
Therefore,
\begin{align}
\lim_{t\to\infty}\left\|\tilde{O}(t) - O(t)\right\| &= \left\|e^{itH}\,e^{i\delta^d D(\mathcal{E})}\,O\,e^{-i\delta^d D(\mathcal{E})}\,e^{-itH} - e^{itH}\,O\,e^{-itH}\right\| \nonumber\\
&= \left\|e^{i\delta^d D(\mathcal{E})}\,O\,e^{-i\delta^d D(\mathcal{E})} - O\right\|,
\end{align}
where we again used unitary invariance of the spectral norm to remove $e^{\pm itH}$. For the upper bound, we use the integral representation
\begin{equation}
e^{iA}\,O\,e^{-iA} - O = i\int_0^1 e^{isA}[A, O]\,e^{-isA}\,ds,
\end{equation}
which gives
\begin{align}
\left\|e^{iA}\,O\,e^{-iA} - O\right\| &\leq \int_0^1 \left\|e^{isA}[A,O]\,e^{-isA}\right\| ds \nonumber\\
&= \int_0^1 \|[A, O]\|\, ds \nonumber\\
&= \|[A, O]\|.
\end{align}
Setting $A = \delta^d D(\mathcal{E})$:
\begin{equation}
\left\|e^{i\delta^d D(\mathcal{E})}\,O\,e^{-i\delta^d D(\mathcal{E})} - O\right\| \leq \|[\delta^d D(\mathcal{E}), O]\| = \delta^d\,\|[D(\mathcal{E}), O]\|.
\end{equation}
This completes the proof.
\end{proof}

\section{Computational details for \cref{sec:vibronic}}

\Cref{fig:vibronic_dynamical_error}(a) displays the time-resolved population error $\Delta P_i(t)$ for the six-mode naphthalene model at three representative Trotter step sizes $\Delta t \in \{0.01,\, 0.1,\, 1.0\}\,\mathrm{fs}$. The empirical error decreases by precisely two orders of magnitude with each tenfold reduction in $\Delta t$, consistent with second-order scaling. \Cref{fig:vibronic_dynamical_error}(b) confirms that this $\Delta t^{2}$ behaviour persists across the full reduced-dimensionality series from $M = 2$ to $M = 8$ modes. To quantify this scaling, we define the maximum population error observed over a target evolution time of $100\,\mathrm{fs}$,
\begin{align}\label{eq:max_pop_error}
    \epsilon_{\Delta t}
    =
    \sup_{\tau \leq 100\,\mathrm{fs},\; i}
    \left| \Delta P_i(\tau;\, \Delta t) \right|,
\end{align}
and fit the empirical scaling $\epsilon_{\Delta t} \propto \Delta t^{\alpha}$, where the exponent $\alpha$ is obtained as the least-squares slope of $\log\epsilon_{\Delta t}$ versus $\log\Delta t$ over the range $\Delta t \in \{0.01, 0.1, 1.0\}\,\mathrm{fs}$. All models exhibit $\alpha \approx 2$ to high precision, confirming the expected behaviour of a second-order product formula. The robustness of this power-law scaling allows us to exploit a single simulation at an accessible step size $\Delta t_{\mathrm{obs}}$ to predict the error at any target step size. Specifically, from the observed $\alpha = 2$ scaling, the ratio of errors at two different step sizes satisfies
\begin{align}\label{eq:error_ratio}
    \frac{\epsilon_{\Delta t_{\mathrm{obs}}}}{\epsilon_{\Delta t_{\mathrm{targ}}}} = \frac{\Delta t_{\mathrm{obs}}^{2}}{\Delta t_{\mathrm{targ}}^{2}},
\end{align}
from which the number of Trotter steps $r_{\mathrm{targ}} = t/\Delta t_{\mathrm{targ}}$ required to achieve a target accuracy $\epsilon_{\mathrm{targ}}$ is immediately obtained as
\begin{align}\label{eq:r_targ}
    r_{\mathrm{targ}} = \frac{t}{\Delta t_{\mathrm{obs}}} \sqrt{\frac{\epsilon_{\Delta t_{\mathrm{obs}}}}{\epsilon_{\mathrm{targ}}}}\,.
\end{align}

\begin{figure}
    \centering
    \includegraphics[width=0.95\linewidth]{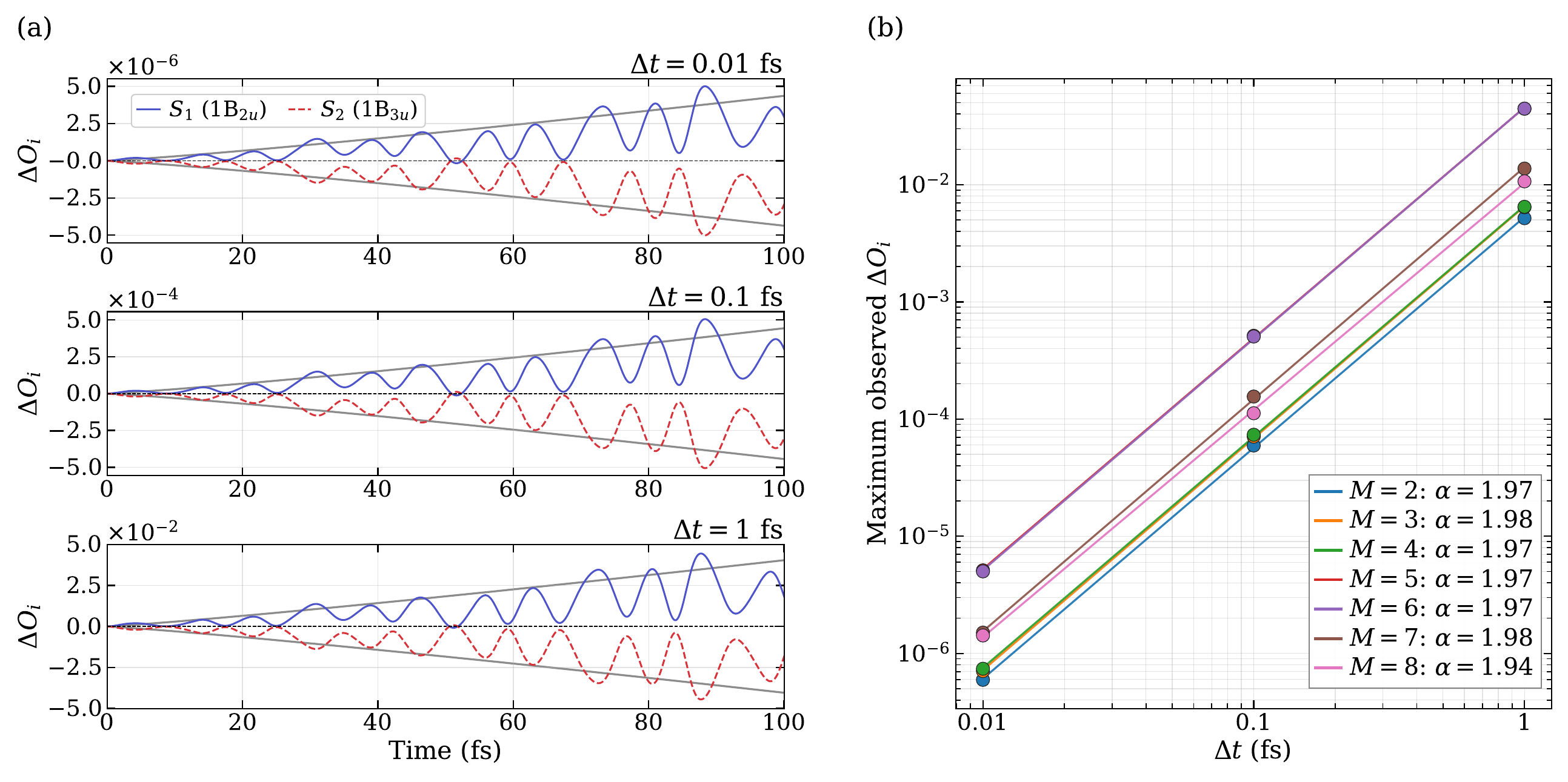}
    \caption{Empirical Trotter error for the two-state naphthalene vibronic model. (a)~Time-resolved state-population Trotter error $\Delta O_i(t)$ for three Trotter step sizes $\Delta t \in \{0.01,\,0.1,\,1\}\,\mathrm{fs}$ in the $M=6$ reduced-dimensionality model. (b)~Scaling of the maximum population error $\epsilon_{\Delta t}$ with Trotter step size $\Delta t$ for reduced-dimensionality models ranging from $M = 2$ to $M = 8$ modes. The empirical power-law exponent $\alpha$ (slope of the log--log fit) is reported for each model, demonstrating extremely robust $\Delta t^{2}$ scaling consistent with the second-order product formula across three orders of magnitude in step size.}
    \label{fig:vibronic_dynamical_error}
\end{figure}

In \cref{fig:vibronic_steps_vs_qubits} we report the estimated number of second-order Trotter steps required to achieve a maximum state-population error $\Delta O_i \leq 0.01$ over $100\,\mathrm{fs}$ of evolution across the reduced-dimensionality naphthalene series. Each vibrational mode is discretized onto 32 real-space grid points, requiring 5 qubits per mode, with a single additional qubit encoding the two dimensional electronic space, for a total of $5M + 1$ qubits in the $M$-mode model. Using the leading-order bounds and demanding $\Delta O_i \leq \epsilon$ yields the required number of steps for $d=2$ as 
\begin{align}
 r & \;\geq\;  \sqrt{\frac{t^{3}\,\|\mathcal{E}\|}{\epsilon}}\,  \quad \qquad \quad \textrm{(observable-independent)}   \\ 
 r & \;\geq\;  \sqrt{\frac{t^{3}\,\|[\mathcal{E},\, O_i]\|}{\epsilon}}\, \quad \quad \textrm{(observable-dependent).}
\end{align}
The DMRG-computed spectral norms $\|\mathcal{E}\|$ and $\|[\mathcal{E}, O_i]\|$ provide bounds that are orders of magnitude tighter than the naive triangle-inequality upper bounds (denoted ``UB'' in \cref{fig:vibronic_steps_vs_qubits}). Notably, the DMRG-computed spectral norms exhibit a clear saturation with increasing $M$: because modes are added in order of decreasing vibronic importance, the first several modes dominate the norms of both $\mathcal{E}$ and $[\mathcal{E}, O_i]$, and subsequent modes contribute negligibly. This saturation is a direct consequence of our mode-importance ranking. The strongly coupled modes that drive the non-adiabatic dynamics are selected first, so the error operators are effectively converged well before all 48 modes are included.\\

The empirical dynamical error extracted from the ML-MCTDH simulations reaches its maximum at $M = 5$, corresponding to a 26-qubit system, where approximately $220$ Trotter steps suffice to achieve $\Delta O_i \leq 0.01$, in contrast to the more than $10^6$ steps estimated by the naive triangle-inequality bounds. Interestingly, for $M > 5$ the maximum population error decreases, suggesting a partial cancellation of error contributions from the additional weakly coupled modes in the time-resolved population dynamics. A related effect was reported in Ref.~\citenum{bay2026quantum}, where significant cancellations were observed in Trotter errors on eigenvalue differences. Since unitary dynamics is dictated by energy differences rather than absolute energies, correlated Trotter errors that shift the spectrum approximately uniformly preserve the gap structure and hence suppress dynamical errors well beyond what norm-based bounds predict.

\end{document}